\def\eqref#1{equation~\ref{#1}}
\def\Eqref#1{Equation~\ref{#1}}
\def\1{\bm{1}}
\DeclareMathAlphabet{\mathsfit}{\encodingdefault}{\sfdefault}{m}{sl}
\SetMathAlphabet{\mathsfit}{bold}{\encodingdefault}{\sfdefault}{bx}{n}
\def\mcZ {\mathcal{Z}}
\def\mcP {\mathcal{P}}
\def\mcN {\mathcal{N}}
\def\mcD {\mathcal{D}}
\def\mcA {\mathcal{A}}
\def\mcR {\mathcal{R}}
\def\RR {\mathbb{R}}
\def\PP {\mathbb{P}}
\DeclareMathOperator\erf{erf}
\theoremstyle{definition}
\newtheorem{theorem}{Theorem}
\newtheorem{lemma}{Lemma}
\title{Securing Transfer-Learned Networks with Reverse Homomorphic Encryption}
\author{Robert Allison$^*$, Tomasz Maciazek$^*$ \\
School of Mathematics\\
  University of Bristol \\
\thanks{Equal contribution.}
\texttt{\{marfa,tomasz.maciazek\}@bristol.ac.uk} 
\And
Henry Bourne \\
School of Mathematics\\
University of Bristol \\
}
\begin{document}

\maketitle


\begin{abstract} 
The growing body of literature on training-data reconstruction attacks raises significant concerns about deploying neural network classifiers trained on sensitive data. However, differentially private (DP) training (e.g. using DP-SGD) can defend against such attacks with large training datasets causing only minimal loss of network utility. Folklore, heuristics, and (albeit pessimistic) DP bounds suggest this fails for networks trained with small per-class datasets, yet to the best of our knowledge the literature offers no compelling evidence. We directly demonstrate this vulnerability by significantly extending reconstruction attack capabilities under a realistic adversary threat model for few-shot transfer learned image classifiers. We design new white-box and black-box attacks and find that DP-SGD is unable to defend against these without significant classifier utility loss. To address this, we propose a novel homomorphic encryption (HE) method that protects training data without degrading model's accuracy. Conventional HE secures model's input data and requires costly homomorphic implementation of the entire classifier. In contrast, our new scheme is computationally efficient and protects training data rather than input data. This is achieved by means of a simple role-reversal where classifier input data is unencrypted but transfer-learned weights are encrypted. Classifier outputs remain encrypted, thus preventing both white-box and black-box (and any other) training-data reconstruction attacks. Under this new scheme only a trusted party with a private decryption key can obtain the classifier class decisions.
\end{abstract}

\section{Introduction}
Since neural networks retain imprints of their training data \cite{memorization,MIA,carlini19,overlearning,ippolito-etal-2023-preventing} it is crucial to train them using privacy-preserving methods - especially in applications where the training-data contains highly sensitive information. Governments have recognised training-data privacy risks as a central issue in the development of ML/AI systems and mention membership inference \cite{MIA} and model inversion  \cite{attribute1,attribute2} attacks (revealing queried examples or training-data `prototypes'/close representatives) explicitly in their official documents \cite{eu_reg, uk_reg}. 

Differential privacy (DP) \cite{dpfy_ml} is the de facto method for privacy-preserving neural-network training. DP provides formal privacy guarantees quantified via the {\it privacy budget}, $\epsilon \geq 0$, and the primary DP algorithm for machine learning purposes is Differentially Private Stochastic Gradient Descent DP-SGD \cite{dp_sgd}. Smaller $\epsilon$ guarantees higher privacy which is achieved by injecting more noise into the DP-SGD training algorithm, degrading neural network utility in turn. In its original database setting, this privacy–utility tradeoff of DP is more pronounced for small datasets \citep{AlgoDP}. A similar effect might reasonably be expected for DP-SGD, i.e severe degradation in neural network performance when used to secure small training-sets; indeed \cite{dp_sgd,tramer2021differentially} discuss this effect. This can be quantified by the {\it{excess empirical risk}} \citep{Bassiliy}, but current theory covers only convex loss functions and asymptotically large training sets. Precisely quantifying this behavior for DP-SGD under realistic threat models remains a challenging open problem. Current privacy accounting theory assumes that an adversary has access to the neural network parameters and the intermediate gradients during training. A more realistic {\it hidden state threat model} \citep{feldman,balle19,ye_shokri,altschuler,cebere2025tighter} assumes no access to the intermediate gradients, but derivation of tight privacy-budget bounds for this model is currently intractable \citep{altschuler,cebere2025tighter}. The seminal work of \citep{informed_adversaries} not only assumes adversary's access to gradients but also full knowledge of all training-data instances bar one. Given that very strong assumption (the {\it{informed adversary}}) a major contribution of \cite{informed_adversaries} is to show how the difficulty of recovering the missing data-item relates to the DP-SGD privacy budget. 

 \begin{wrapfigure}{l}{0.5\textwidth}
  \centering
 \includegraphics[width=0.48\textwidth]{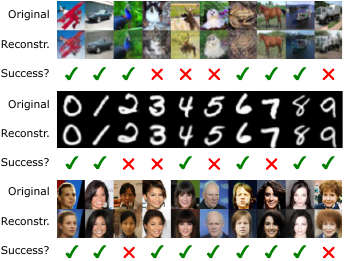}
  \caption{Reconstructions obtained via our attack with CIFAR-10 \citep{cifar}, MNIST \citep{mnist} (reconstructed in the  $32\times 32$-resolution) and CelebA \citep{celebA} (reconstructed in the $64\times 64$-resolution). The reconstruction ``success'' is determined via the Neyman-Pearson criterion at reconstruction $FPR=1\%$, see Section \ref{sec:rero}. TL training set size $N=10$.}
  \label{fig:rec_examples}
\end{wrapfigure}
In the hidden state threat model the above results are pessimistic -- far less privacy may suffice to defend against a data reconstruction attack. Lacking relevant theory, security advisors may push overly damaging defenses "just to be on the safe side" prompting user doubt about their practical value. In some cases meeting such pessimistic security bounds can cripple utility sometimes leaving neural networks unusable or outperformed by simpler methods \citep{tramer2021differentially}. We would like to dispel such skepticism where it is misplaced and can lead to dangerously unsecured networks. In all cases where DP-SGD is genuinely incapable of providing an acceptable privacy-utility tradeoff, we should provide new alternative defense methods. As a first step, we provide firm evidence of the need for alternative security measures in few-shot transfer learning (TL) which is widely used in practice due to its effectiveness and computational efficiency \citep{chen19closerfewshot,big_transfer}. TL is commonplace e.g. in medical imaging \citep{PACHETTI2024102949} where data examples contain extremely sensitive information. Working in a realistic threat model where, as discussed above,  theoretical analysis is currently unavailable, we design sophisticated and bespoke training-data reconstruction attacks that are powerful enough to operate under this threat model. We show that image classifiers transfer-learned on datasets containing few examples per class can suffer accuracy degradation of $10-30$ percentage points when trained with DP-SGD strong enough to defend against our data reconstruction attacks. This setting is also common in ML problems with class-imbalanced data \cite{imbalanced_fraud,imbalanced_protein,inaturalist}, where downsampling/reweighting \cite{pmlr-v80-ren18a,bankes2024reducr} effectively creates few-shot training sets.

\textbf{Our Threat Model (Weak Adversary)\quad}The adversary has access to the released model and their aim is to reconstruct some examples from the training dataset. Note that successful reconstruction of just a single data item should be considered a serious privacy breach. More specifically, we assume the following threat model in which (only) the following hold.
\begin{enumerate}
\setlength\itemsep{0em}
\item[(A.1)] White-box: model's architecture and released parameters $\theta$ are known. \label{A1wb}
\item[(A.1$^*$)] Black-box: model's architecture and its hard-label decisions for any input data are known. \label{A1bb}
\item[(A.2)] Model's training algorithm $\mcA$ is known. \label{A2}
\item[(A.3)] Prior distribution, $\pi$, from which the training data has been sampled and the size of the training set $N$ are known. \label{A3}
\end{enumerate}
We call an adversary satisfying only (A.1/1$^*$–A.3) the {\it weak adversary}, in contrast to the {\it informed adversary} that has been considered in previous works \cite{informed_adversaries,bounding_reconstruction}. Assumptions (A.1/1$^*$–A.3) effectively mean that the trained model (but not the transfer-learned weights in case of the black-box attack) and its transfer-learning code are public. The adversary knows all the hyper-parameters used in the transfer-learning (number of epochs, learning rate, weight decay, batch size, the initialization distribution of the weights, etc.), {\it but not} the gradients that were computed during the training or the random seeds used for weight initialization/minibatch sampling. This is indeed realistic -- models are often publicly released, while seeds and intermediate gradients are not. Even if the above details are kept private, "{\textit{privacy (and security) through obscurity is generally regarded as a bad practice.}}" \citep{informed_adversaries}.

Whilst we confirm that few-shot TL is vulnerable to our training-data reconstruction attacks against which DP-SGD cannot defend, we also find that TL is very well-suited to a novel, efficient defense based on homomorphic encryption (HE, see Section \ref{sec:rhe}). This defense also blocks membership and property inference attacks; the former is even harder than reconstruction to defend with DP-SGD, and the latter is known to be undefendable with DP-SGD when properties are global \citep{property_inference1}.

\textbf{Our main contributions are as follows:}
\begin{enumerate}[leftmargin=*]
\setlength\itemsep{0em}
    \item We substantially extend the capabilities of white-box attacks and introduce novel highly effective hard-label black-box attacks, giving direct evidence, under realistic assumptions for the adversary, that few-shot transfer learning is vulnerable to such attacks (Section \ref{sec:attack}).
    \item We demonstrate that these attacks cannot be defended against by DP-SGD without severe classifier accuracy degradation (Section \ref{sec:dpsgd}).
    \item We introduce a principled Neyman–Pearson scheme for evaluation of the effectiveness of a reconstruction attack that accounts for both false positives and false negatives, improving on previous approaches (Section \ref{sec:rero}).
    \item We devise the Reverse Homomorphic Encryption (RHE) mechanism, defending training data in transfer-learning against any reconstruction, membership or property-inference attack (Section \ref{sec:rhe}).
    \item We show that this RHE defence does not degrade classifier utility and, despite using homomorphic encryption, provides a practical level of classifier efficiency (Section \ref{sec:rhe_efficiency}).
\end{enumerate}
The novelty of our hard-label black-box attack relies on combining our bespoke white-box attack with the black-box weight extraction \citep{weight_extraction} resulting with a first attack that produces faithful high quality reconstructions (see \Cref{tab:rec_tpr_fpr} and \Cref{fig:blackbox_examples}) and is robust against DP-SGD.

\textbf{Comparison with Prior Work\quad} Despite recent significant progress in (white-box) training data reconstruction attacks, some are defendable by DP-SGD and none of the existing approaches applies to our weak adversary threat model. Appendix \ref{app:prior_work} analyzes these and prior black-box attacks, reviews implementations of HE in TL, and discusses connections and key differences between our work vs. model inversion (studying \emph{exact/faithful} reconstruction vs. representative sampling). 

\section{Reconstruction Robustness Measures}
\label{sec:rero}
In order to evaluate the effectiveness of a reconstruction attack (or a model's robustness against these attacks) one needs to report both true- and false-positive reconstruction rates. The key object containing this information is the ROC curve.

\textbf{Notation\quad} Let $\mcZ\subset\RR^d$ be the domain where the training data lives. We denote by $\pi$ be the prior distribution of the training data over $\mcZ$ and by $\pi_N$ the product-prior over $\mcZ^N$. The (normalized) reconstruction error function will be denoted by  $\ell:\, \mcZ\times\mcZ\to [0,1]$. The weights of the neural network model will be denoted by $\theta\in\Theta$, where $\Theta$ is the range of a randomized training mechanism $M:\,\mcZ^N\to\Theta$. A reconstruction attack $\mcR:\,\Theta\to\mcZ$ deterministically maps weights of the trained model to a reconstruction.

\textbf{The Hypothesis Testing Approach\quad} We measure effectiveness of a reconstruction attack with respect to a given training set $\mcD_N=\{Z_1,\dots,Z_N\}$ by calculating the min-error function $\ell_{\min}=\min_{Z\in\mcD_N}\ell\left(Z,\mcR(\theta)\right)$. The training mechanism is random and the training data is drawn randomly form $\pi_N$, thus $\ell_{\min}$ is also a random variable. We look at the distribution of $\ell_{\min}$ from the hypothesis testing perspective where we aim to distinguish between the following two hypotheses.
\textit{The null hypothesis $H_0$} states that the reconstruction $\mcR(\theta)$ comes from weights $\theta$ of a model trained on $\mcD_N$ i.e.,
\[
H_0:\quad \ell_{\min}=\min_{Z\in\mcD_N}\ell\left(Z,\mcR(\theta)\right)\quad\text{with}\quad \mcD_N\sim\pi_N,\ \theta\sim M(\mcD_N).
\]
We denote the resulting probability density of $\ell_{\min}$ as $\rho_0$.
\textit{The alternative hypothesis $H_1$} states that $\theta$ are the weights of a model trained on a different training set $\mcD_N'$ i.e.,
\[
H_1:\quad \ell_{\min}=\min_{Z\in\mcD_N}\ell\left(Z,\mcR(\theta)\right)\quad\text{with}\quad \mcD_N\sim\pi_N,\ \mcD_N'\sim\pi_N,\ \theta\sim M(\mcD_N').
\]
We denote the resulting probability density of $\ell_{\min}$ as $\rho_1$. When deciding between $H_0$ and $H_1$, the Neyman-Pearson hypothesis testing criterion accepts $H_0$ for a given value of $\ell_{\min}$ when the likelihood-ratio satisfies $\rho_0(\ell_{\min})/\rho_1(\ell_{\min}) > C$ for a given threshold value $C\in\RR_{\geq 0}$. The Neyman-Pearson {\it true-positive rate} ($TPR_{NP}$) and the {\it false-positive rate} ($FPR_{NP}$) (both are functions of $C$) are defined as the probabilities of accepting $H_0$ under $\ell_{\min}\sim\rho_0$ and $\ell_{\min}\sim\rho_1$ respectively. Intuitively, $FPR$ measures how good the attack $\mcR$ is at generating random images acting as likely reconstructions. Thus, from the perspective of the adversary it is desirable to achieve high $TPR$ at low $FPR$, while a defender may require that this $TPR$-at-low-$FPR$ falls below a certain acceptable threshold. The \textit{ROC curve} shows how $TPR$ changes as a function of $FPR$ when varying $C$. One might reliably quantify the attack by finding the threshold $C$ for which $FPR = 1\%$ or $FPR = 0.1\%$ and calculating the corresponding $TPR$. This fact has also been emphasized in the context of MIA \citep{mia_first_principles,mia_fdp}. The Neyman-Pearson Lemma \citep{np_lemma} asserts that the so-obtained ROC curve describes the uniformly most powerful hypothesis test. Note that in practice we typically cannot exactly evaluate the densities $\rho_0$ and $\rho_1$. However, as we show experimentally in Appendix \ref{app:security_game}, for our attacks the random variable $\phi=\log\left(\ell_{\min}/(1-\ell_{max})\right)$ is approximately normally distributed when $\ell\in[0,1]$. This allows one to apply the Neyman-Person criterion using the estimated Gaussian probability density of $\phi$. One can also consider the \textit{cumulative} $TPR/FPR$ defined as $TPR_{cum}(\tau):=\mcP_{\ell_{\min}\sim\rho_0}\left[\ell_{\min}<\tau\right]$, $FPR_{cum}(\tau):=\mcP_{\ell_{\min}\sim\rho_1}\left[\ell_{\min}<\tau\right]$. These are more straightforward to estimate, albeit the resulting (cumulative) ROC curves can be sub-optimal. In our experiments we have observed that the Neyman-Perason and the cumulative ROC curves were extremely close to each other (see Appendix \ref{app:roc_np}) . Theorem \ref{thm:roc_main} (proved in Appendix \ref{app:security_game}) shows that this holds universally at low $FPR$.
\begin{theorem}
\label{thm:roc_main}
Assume that there exists a strictly increasing differentiable transformation $\Phi$ such that 
$\Phi(\ell_{\min})\sim\mcN\left(\mu_0,\sigma_0^2\right)$ if $\ell_{\min}\sim \rho_0$ and $\Phi(\ell_{\min})\sim\mcN\left(\mu_1,\sigma_1^2\right)$ if $\ell_{\min}\sim \rho_1$ with $\mu_0 < \mu_1$.  Then, the Neyman-Pearson ROC is asymptotically equivalent to the cumulative ROC at low $FPR$.
\end{theorem}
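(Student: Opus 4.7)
The plan is to reduce the problem to a two-Gaussian hypothesis test via the change of variables $y=\Phi(\ell_{\min})$ and then to show that, at low $FPR$, the Neyman--Pearson acceptance region collapses (to leading order) onto the one-sided half-line used by the cumulative test. Because $\Phi$ is a strictly increasing differentiable bijection, its Jacobian cancels in the likelihood ratio $\rho_0(\ell_{\min})/\rho_1(\ell_{\min})=f_0(y)/f_1(y)$, where $f_j$ is the density of $\mcN(\mu_j,\sigma_j^2)$; likewise $\{\ell_{\min}<\tau\}=\{y<\Phi(\tau)\}$. Both ROCs are therefore invariant under $\Phi$ and it suffices to prove the claim in $y$-space. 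In this reduced model the cumulative test at level $\alpha$ uses threshold $t_\alpha=\mu_1+\sigma_1\Psi^{-1}(\alpha)$ (with $\Psi$ the standard normal CDF), giving $TPR_{cum}(\alpha)=\Psi((t_\alpha-\mu_0)/\sigma_0)$.

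Next I classify the Neyman--Pearson acceptance region, which is the super-level set $\{y:q(y)>\log C\}$ of the quadratic $q(y)=\log f_0(y)-\log f_1(y)$ with leading coefficient $A=1/(2\sigma_1^2)-1/(2\sigma_0^2)$. When $\sigma_0=\sigma_1$ the quadratic degenerates to a linear function, its level set is a half-line, and the two tests coincide exactly. When $\sigma_0>\sigma_1$ the level set is a union $\{y<y_-\}\cup\{y>y_+\}$ with $y_\pm=y^\ast\pm r$ bracketing the vertex $y^\ast=(\mu_1/\sigma_1^2-\mu_0/\sigma_0^2)/(1/\sigma_1^2-1/\sigma_0^2)$. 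A direct manipulation gives $y^\ast-\mu_1=(\mu_1-\mu_0)\sigma_1^2/(\sigma_0^2-\sigma_1^2)>0$, so the chain $\mu_0<\mu_1<y^\ast$ follows from the hypothesis $\mu_0<\mu_1$, and the left half-line lies closer to both means than the right one. Applying the Mills-ratio asymptotics $\Psi(-u)\sim\varphi(u)/u$ to each tail, the right-half-line contributions to $FPR_{NP}$ and $TPR_{NP}$ are suppressed below the left-half-line ones by factors $\exp(-2r(y^\ast-\mu_1)/\sigma_1^2)$ and $\exp(-2r(y^\ast-\mu_0)/\sigma_0^2)$ respectively, both exponentially small. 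Hence at low $FPR$ the NP test is effectively the one-sided threshold test at $y_-$; equating $FPR_{NP}=\alpha$ identifies $y_-$ with $t_\alpha$ up to a correction of the same exponentially small order, and this precision is enough to deduce $TPR_{NP}(\alpha)/TPR_{cum}(\alpha)\to 1$. The remaining case $\sigma_0<\sigma_1$ is structurally analogous, with the analysis centered at the vertex of the now-inverted quadratic.

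The main obstacle is controlling this Mills-ratio comparison carefully enough that the small correction in the location of $y_-$ does not spoil the ratio of the two TPRs. Since $\Psi(-u)$ is very flat for large $u$, concluding that $\Psi(-u+\epsilon)/\Psi(-u)\to 1$ requires $\epsilon\cdot u\to 0$, so one needs genuine exponential decay of the subdominant-tail correction, not merely $o(1)$. The algebraic input that makes this work is the ordering $\mu_0<\mu_1<y^\ast$, which guarantees that both Gaussians agree on which half-line dominates and that the two suppression rates above share the correct sign; without it the NP and cumulative tests would not even move in the same direction along the $y$-axis.
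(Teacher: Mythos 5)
Your reduction to the two-Gaussian problem, the case split on $\sigma_0$ versus $\sigma_1$, and the treatment of $\sigma_0\geq\sigma_1$ all match the paper's proof in substance: the paper likewise changes variables through $\Phi$, observes that the log-likelihood ratio is linear when $\sigma_0=\sigma_1$ (exact coincidence of the two ROCs) and a convex quadratic when $\sigma_0>\sigma_1$, so that the acceptance region is the complement of a bounded interval $(r_0-\delta,r_0+\delta)$, and then drives $\delta\to\infty$ to reach the low-$FPR$ regime. Where you differ is only in the endgame: the paper writes $TPR_{NP}$ and $FPR_{NP}$ explicitly as differences of error functions and verifies the equivalence by substituting an asymptotic expansion of $\erf^{-1}$ (Blair's formula) into the closed-form cumulative ROC, whereas you argue that the right half-line contributes an exponentially subdominant fraction of both tail probabilities and then match thresholds, correctly flagging that the flatness of $\log\Psi(-u)$ (derivative of order $u$) forces the threshold perturbation to be $o(1/u)$, which the exponential suppression supplies. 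Your ordering claim $\mu_0<\mu_1<y^\ast$ and the suppression exponents check out, so for $\sigma_0\geq\sigma_1$ your argument is a correct and somewhat cleaner organization of the same computation.

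The genuine gap is your final sentence. The case $\sigma_0<\sigma_1$ is \emph{not} structurally analogous: the quadratic is then concave, so the Neyman--Pearson acceptance region is a \emph{bounded} interval around the vertex $y^\ast$, which now satisfies $y^\ast<\mu_0<\mu_1$, and the likelihood ratio is bounded above by $C_{\max}=f_0(y^\ast)/f_1(y^\ast)>1$. Low $FPR$ is reached by shrinking that interval to a point, where $TPR_{NP}/FPR_{NP}\to C_{\max}>1$, whereas the cumulative left-tail test requires $\tau\to-\infty$, and there the narrower $H_0$ Gaussian has \emph{less} tail mass than the wider $H_1$ Gaussian, so $TPR_{cum}/FPR_{cum}\to 0$. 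The two ROCs are therefore not asymptotically equivalent in that regime, and no rearrangement of the Mills-ratio argument can rescue the claim. You are in good company --- the paper's own proof dismisses this case with the identical assertion that it ``can be treated fully analogously'' --- but as stated neither proof establishes the theorem without the additional hypothesis $\sigma_0\geq\sigma_1$ (which does hold in the paper's experiments, cf.\ the fitted values of $\sigma$ reported in their appendix).
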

Many prior works evaluate the data reconstruction attacks using only $TPR_{cum}$ -- see {\textit{reconstruction robustness}} in \cite{informed_adversaries,bounding_reconstruction}. In the context of this work this becomes uninformative for large $N$: a trivial attack achieves $TPR\to1$ (see Lemma \ref{lemma:trivial_attack} in Appendix \ref{app:security_game}). In Appendix \ref{app:roc_np} we also compare our attack to a baseline that simply draws the reconstructions randomly from the prior $\pi$. These ideas are further formalized via a modified membership-inference security game \citep{mia_first_principles} (Appendix \ref{app:security_game}).

\textbf{Selecting the error function  $\ell$ \quad} A common choice for the error function $\ell$ is the mean-squared error ($MSE$) \citep{informed_adversaries,bounding_reconstruction}. However, $MSE$ often fails to capture structural features of the images when compared to other perceptual metrics such as SSIM \citep{ssim} or LPIPS \citep{lpips}. In our evaluation methodology we use the LPIPS error function since it turned out to produce best ROC curves (it is also a component of the loss function in our reconstruction attack, see Section \ref{sec:reconstruction_details}). $TPR$-at-low-$FPR$ results are reported in Table~\ref{tab:rec_tpr_fpr} and the ROC curves are reported in Appendix \ref{app:security_game}.

\section{White-box and Black-box Attacks and their Robustness}
\label{sec:attack}
\textbf{Transfer Learning Setup\quad} 
In this work, we target neural network image classifiers trained via transfer learning (TL) \citep{TL_book, chen19closerfewshot}. We consider the TL where: 1) a base model is pretrained for a general classification task on a large dataset and 2) the pretrained model is subsequently adapted to a new task where the training dataset is small by replacing its output layer with a fully connected \textit{head neural net}. We train only the new head-NN with the remaining weights of the base model being frozen. Thus, the head-NN learns using pretrained deep features. Our weak-adversary assumptions (A.1/1$^*$–A.3) apply to the small TL dataset and training procedure -- we do not reconstruct the pretraining data. We emulate this setup in experiments described in  Table~\ref{tab:experiments}.
\begin{table}
   \centering
  \begin{tabular}{c|ccc}
      \toprule
   TL Task & Base Model & Base Task & Head Size   \\
   \midrule
    MNIST (10-class) & VGG-11 & EMNIST-Letters & $10$ (WB) or $10-10$ (BB)    \\
    CIFAR-10 (10-class)   & EfficientNet-B0 & CIFAR-100 & $10$ (WB) or $16-10$ (BB)     \\
    CelebA  (binary) & WideResNet-50 & ImageNet-1K & $4-1$ (WB and BB)     \\
    \bottomrule
  \end{tabular}
    \caption{TL setups. Each of the base models \citep{vgg,EffNet,wide_resnet} has been pretrained on the corresponding base task: EMNIST-Letters \citep{emnist}, CIFAR-100 or ImageNet-1K \citep{imagenet}. The transfer-learning dataset sizes in the experiments range from $N=10$ to $N=40$. In the MNIST and CIFAR-10 experiments the classes were balanced (i.e. $M$-shot learning with $M\in\{1,4\}$) while in the CelebA experiment we emulate transfer learning for binary face recognition with unbalanced data -- the positive class constituted $10\%$ of the training set. We use the cross-entropy as the training loss (with class reweighing in the unbalanced case). The head-NN size $d_1 - d_2$ denotes the architecture $In\xrightarrow{}FC(d_1)\xrightarrow{ReLU}FC(d_2)\xrightarrow{}Out$ where $FC(d)$ is the fully connected layer of width $d$. Different head-NN architectures were used for the white-box (WB) and black-box (BB) attacks. \label{tab:experiments}}
\end{table}
Care was taken to ensure that the transferred models attained high accuracy - see Appendix \ref{app:transfer}.

\subsection{Reconstruction Attack Results and Methodology}
\label{sec:reconstruction_details}
We demonstrate that in the TL setup our white-box attack reconstructs some of the training data-points from the one-shot transfer-learning dataset with reconstruction $TPR$ between $50-80\%$ at $FPR=1\%$, depending on the experiment (see Table~\ref{tab:rec_tpr_fpr} and examples in Fig.~\ref{fig:rec_examples}). In four-shot learning the $TPR$ values are lower (between $5.5-27\%$ at $FPR=1\%$), but the attack is still effective and reconstructions are of high quality.
\begin{table}
  \centering
  \begin{tabular}{ccc|cc||c|c}
      \toprule
    & \multicolumn{4}{c||}{$N=10$}  &  \multicolumn{2}{c}{$N=40$} \\
    \cmidrule(r){2-7}
   Experiment &\multicolumn{2}{c|}{\tiny{$TPR$@$FPR=1\%$}}  & \multicolumn{2}{c||}{\tiny{$TPR$@$FPR=0.1\%$}} &  \tiny{$TPR$@$FPR=1\%$} & \tiny{$TPR$@$FPR=0.1\%$}\\
    & \small{WB} & \small{BB} & \small{WB} & \small{BB} & \small{WB} & \small{WB} \\
   \midrule
    MNIST     & $49.8\,\%$ & $29.3\,\%$ & $15.4\,\%$ & $7.1\,\%$  & $5.5\,\%$ & $1.1\,\%$ \\
    CIFAR   & $62.4\,\%$ & $35.4\,\%$ & $24.3\,\%$ & $9.8\,\%$ & $12.9\,\%$ & $2.9\,\%$ \\
    CelebA     & $80.2\,\%$ & $68.6\,\%$ & $49.6\,\%$ & $48.3\,\%$ & $27.1\,\%$ & $9.9\,\%$ \\
    \bottomrule
  \end{tabular}
  \caption{Reconstruction rates for the white-box (WB) and black-box (BB) attacks for the MNIST, CIFAR-10 and CelebA experiments depending on the training set size $N$ of the attacked model. \label{tab:rec_tpr_fpr}}
\end{table}

\textbf{Reconstructor NN and the Shadow Training Method\quad} The reconstructor NN is trained using the shadow model training method introduced by \cite{MIA} that relies on creating a large number of ``shadow models'' that mimic the model to be attacked, see Algorithm~\ref{alg:shadow_training}. 
\begin{algorithm}
\DontPrintSemicolon
\KwIn{Dataset $\mathcal{D}$ sampled from the prior $\pi$, classifier training set size $N$, weight initialization distribution $\pi_\Theta$, transfer-learning algorithm $\mathcal{A}$, number of shadow models $N_{shadow}$.}
\KwOut{Shadow dataset $\mathcal{D}_{shadow}$.}
Initialize $\mathcal{D}_{shadow} \gets [\ ]$ as an empty list.\;

\For{$k \gets 1$ \KwTo $N_{shadow}$}{
    Sample the training dataset $\mathcal{D}_k$ from $\mathcal{D}$, with size $|\mathcal{D}_k| = N$.\;
    Initialize the weights and biases of the trainable layers $\theta_0 \sim \pi_\Theta$.\;
    Train the trainable layers $(\theta_0,\mathcal{D}_k) \xrightarrow{\mathcal{A}} \theta_k$.\;
    Append to the list: $\mathcal{D}_{shadow} \gets \mathcal{D}_{shadow} + [(\mathcal{D}_k, \theta_k)]$.\;
}

\caption{Shadow model training}
\label{alg:shadow_training}
\end{algorithm}
We split each dataset $\mcD$ (MNIST, CIFAR-10, CelebA) into disjoint $\mcD_{train}$ and $\mcD_{val}$, run Algorithm~\ref{alg:shadow_training} on each to obtain the disjoint shadow datasets $\mathcal{D}_{shadow}^{train}$ and $\mathcal{D}_{shadow}^{val}$. The reconstructor NN is trained on $\mathcal{D}_{shadow}^{train}$ and tested on $\mathcal{D}_{shadow}^{val}$. It takes the flattened weights/biases of the attacked head-NN's and outputs a single image. In MNIST and CIFAR-10 experiments we use a conditional reconstructor which takes head-NN's parameters concatenated with the one-hot encoded class vector that determines the class of the output image, enabling one reconstruction per class (in one-shot TL this reconstructs the entire training set). This is sufficient to constitute a serious security breach.

\textbf{Reconstructor NN Architecture\quad} We have used a (modified) image generator architecture from \cite{resgan1,resgan2}, see Appendix \ref{app:architecture}. This approach is widely used in the (generative) model inversion field \citep{attribute2,generative1,model_inversion2,variational,hard_label1,hard_label2,hard_label3} enabling high quality sampling from the data generating
distribution of the training data, see \Cref{app:prior_work}.

\textbf{Weight Extraction in the Black-box Attack\quad} In the black-box setting, head-NN's weights are not public. To extract them we run the hard-label black-box weight extraction attack by \cite{weight_extraction} that extracts hidden layers' weights (in the black-box experiments the head-NNs have one hidden layer, see Table~\ref{tab:experiments}) which we feed into the reconstructor NN. The weights are reconstructed only up to neuron permutations and rescalings (see Appendix \ref{app:blackbox_details}. This ambiguity causes the black-box reconstruction to be less effective than its white-box counterpart, see Table~\ref{tab:rec_tpr_fpr}.

We train the image classifiers on balanced sets i.e., $N=C\cdotp M$, where $C$ is the number of classes. For CIFAR-10 and MNIST, $C=10$ with $M$ data-points per class. In CelebA-experiment (binary face recognition task) each classifier is trained on $M$ images of the target identity and $9M$ images of random others; identities vary across shadow models. We train $N_{shadow}^{train}=2.56\times 10^6$ training shadow models and  $N_{shadow}^{val}=10^3/10^4$ validation shadow models for the CIFAR-10, MNIST/CelebA experiments respectively. From training-shadow models we also calculate the mean $\overline\theta$ and the covariance matrix $\mathrm{Cov}_\theta$ of the weights (flattened and concatenated across layers). We normalize the input of the reconstructor NN as $\theta_{k,i} \to (\theta_{k,i}-\overline\theta_i)/\sqrt{[\mathrm{Cov}_\theta]_{i,i}}$ due to the possible variation in the magnitude of the trained weights. In estimating $TPR$ and $FPR$ of the reconstructor NN we use the Neyman-Person criterion, see Algorithm~\ref{alg:tpr_fpr_est} in Appendix~\ref{app:security_game}.
  
\textbf{Reconstructor NN Loss Function\quad} Let $(\mcD_{k},\theta_k)\in\mcD_{shadow}$ and let $\mcD_{k}^c=\{Z_1^c,\dots,Z_M^c\}$ be the subset of $\mcD_{k}$ consisting of images of the class $c\in\{0,\dots, C-1\}$. Let $\theta_k^c$ be the vector $\theta_k$ appended with the one-hot encoded class vector of the class $c$. Given a reconstruction $R(\theta_k)$ we define the reconstruction loss of the conditional reconstructor (for CelebA we always take $c=1$, the class corresponding to one specific individual) via the following {\it softmin} function.
\begin{equation}
loss_{rec}\left(\mcD_{k}^c,R(\theta_k^c)\right)=\frac{\sum_{i=1}^M \ell_i\exp\left(-\alpha\,\ell_i\right)}{\sum_{i=1}^M\exp\left(-\alpha\,\ell_i\right)}, \quad \alpha>0,
\end{equation}
where $\ell_i\left(Z_i^c,R(\theta_k^c)\right)$ is the sum of the  mean squared error (MSE), mean absolute error (MAE) and the LPIPS loss \citep{lpips} (also used in \citet{informed_adversaries}). In our experiments we have worked with $\alpha=100$. The (soft)-minimum in this loss function is taken over the target images preventing the reconstructor from outputting aggregates of several targets (see \Cref{fig:CelebA_N40_rec_examples} in \Cref{app:examples} for examples). For the reconstructor-NN training we use Adam optimizer \citep{adam} (learning rate $0.0002$ and batch size $32$). Early stopping after about $10^6$ gradient steps prevents  overfitting. This takes up to $72$ hours to train on a GeForce RTX 3090 GPU. See Appendix \ref{app:attack_timings} for detailed attack timings which never exceed $80$ hours (including shadow model generation on multiple cores). Note that a determined adversary can possess considerably more resources than this.

Reconstruction examples and with $TPR$ values are presented in Table~\ref{tab:rec_tpr_fpr} and in Fig.~\ref{fig:rec_examples}. More reconstruction examples and further  details of the experiments see Appendix \ref{app:examples} and Appendix \ref{app:experiment_details}. Our attack is robust under a wide range of circumstances: shadow model budget variations, varying classifier's training set size, reconstructor conditioning mismatch, varying classifier's training hyperparameters (SGD/Adam, weight initialization, underfitting/overfitting, data augmentation), and out-of-distribution mismatch between data priors for training- and validation-shadow models. Detailed ablation studies for the white-box attack are presented in Appendix \ref{app:reconstruction_factors}.

\subsection{Our Reconstruction Attack Under DP-SGD} 
\label{sec:dpsgd}
\begin{figure}
  \centering
 \includegraphics[width=\textwidth]{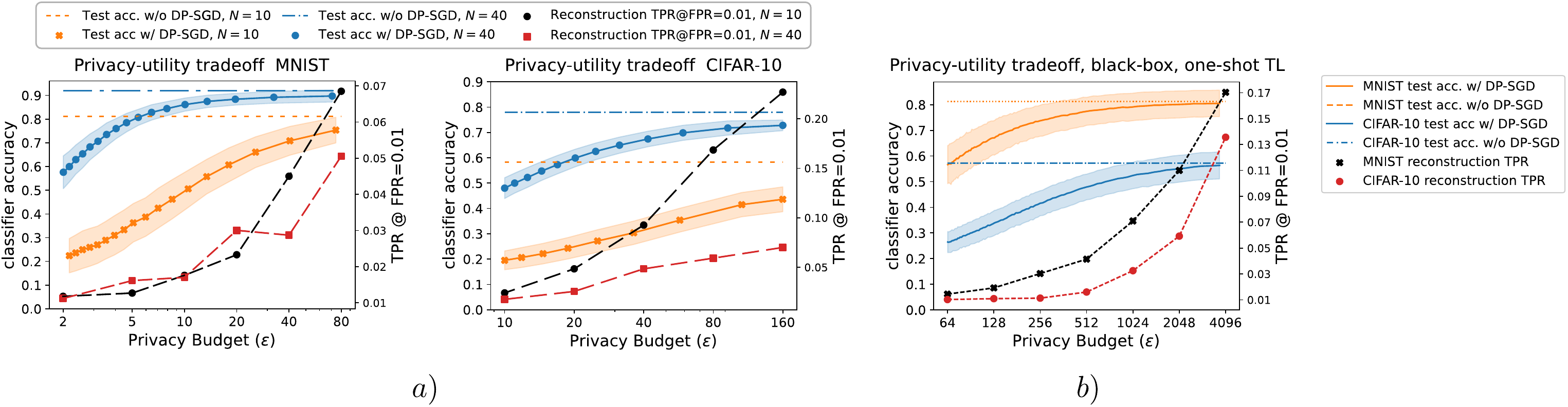}
  \caption{Privacy-utility tradeoff for the MNIST and CIFAR-10 experiments: a) white-box attack, b) black-box attack. All models suffer severe accuracy degradation even for relatively large values of $\epsilon$.}
  \label{fig:privacy_utility}
\end{figure}

DP-SGD adds noise to gradients during training to provably protect training data from reconstruction \citep{dp_sgd}. We study defense against the (realistic) weak adversary and show that, in the few-shot regime, effective protection requires very noisy DP-SGD, sharply degrading utility. Fig.~\ref{fig:privacy_utility} shows the privacy–utility tradeoff and our attack’s effectiveness in the MNIST and CIFAR-10 experiments. The experimental setup was the same as in Section \ref{sec:attack}.  We train with mini-batch DP-SGD (Gaussian noise; gradients clipped at $l_2$-norm $C$) to achieve $(\epsilon,\delta)$-DP with $\delta=N^{-1.1}$ as recommended in \cite{dpfy_ml}. We make use of the privacy amplification enabled by the mini-batch Poisson sampling with $q=(N-1)/N$. Optimal $C$ values were $4.0$ (MNIST) and $1.2$ (CIFAR). Further details are in Appendix \ref{app:dpsgd}.

\textbf{The $TPR$-at-low-$FPR$ criterion \quad} To find the value of $\epsilon$ which is sufficient to defend against our reconstruction attack, we look at $TPR$ at low $FPR$ (the ``low'' value of $FPR$ is somewhat arbitrary -- here, we take $1\%$). As plots in Fig.~\ref{fig:privacy_utility} indicate, the $TPR$ drops when $\epsilon$ grows. Thus, to decide if our attack has been defended against, one can select a threshold $\gamma>0$ and test the criterion $TPR \leq \gamma$ at $FPR=0.01$. Smaller $\gamma$ means stricter defense criterion. 
In our experiments we have found that $\gamma = 0.03$ is sufficient to remove almost all of the details of the original data from the reconstructions. By interpolating data from Fig.~\ref{fig:privacy_utility}a for white-box attacks this translates to a privacy budget of $\epsilon \approx 22.9$ and $\epsilon \approx  27.9$ for MNIST with $N=10$ and $N=40$ respectively. For CIFAR-10 we get $\epsilon \approx 12.6 $ and $\epsilon \approx  19.3$ for $N=10$ and $N=40$ respectively. 
\begin{wrapfigure}{l}{0.5\textwidth}
  \centering
 \includegraphics{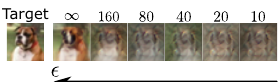}
  \caption{Reconstruction example for CIFAR-10 and $(\epsilon,\delta)$-DP models for training set size $N=10$. Non-private training means $\epsilon=\infty$.}
  \label{fig:dpsgd_example}
\end{wrapfigure}
As can be read off from the test accuracy plots in Fig.~\ref{fig:privacy_utility}a, defending a training set of the size $N=10$ against our attack causes model's accuracy to drop by over $30$ p.p. both for MNIST and CIFAR-10. When $N=40$ the accuracy drop is smaller, but still notable -- slightly above $10$ p.p. for MNIST and $18$ p.p. for CIFAR-10. Fig.~\ref{fig:dpsgd_example} illustrates how different values of $\epsilon$ affect a CIFAR-10 reconstruction for $N=10$. One can see that at the threshold value where DP-SGD is deemed to have succeeded ($\epsilon$ between $10$ and $20$) the reconstruction becomes very blurry and for higher $\epsilon$ some features of the target are still being reconstructed indicating a privacy breach (this is especially evident when $\epsilon=160$ -- the reconstruction $TPR$ is still above $23\%$). For more white-box reconstruction examples under DP-SGD see Appendix \ref{app:dpsgd}. For black-box attacks from Fig.~\ref{fig:privacy_utility}b (one-shot TL) the threshold of $\gamma=0.03$ is reached at $\epsilon\approx 280$ and $\epsilon\approx 950$ for MNIST and CIFAR-10 respectively. This results in accuracy drop of $9$ p.p. and $8$ p.p. for MNIST and CIFAR-10 respectively -- the privacy-utility tradeoff is non-negligible, albeit considerably less pronounced than in the case of white-box attacks.
\section{Reverse Homomorphic Encryption}
\label{sec:rhe}
\textbf{Conventional Homomorphic Encryption (HE)\quad} HE was proposed by \cite{rivest1978data} where the notion ’homomorphism’ in the context of cryptography was first introduced. \cite{gentry2009fully} gave the first functioning fully homomorphic scheme, prompting a flurry of subsequent schemes. 
For readers new to HE \citep{ko2025beginner, cheon2017homomorphic, yagisawa2015fully, fan2012somewhat, marcolla2022survey}, here is a brief overview sufficient for understanding the main concepts in this paper. HE enables computation directly on encrypted data (which is called ciphertexts): encryption uses a public key (i.e. a key that is freely available without compromising the security of encrypted data) and decryption uses a private key that must be held by a trusted party because its release would compromise the security of the process. 
The goal of HE is full data confidentiality -- for example, when outsourcing computation to the cloud without revealing the sensitive data. This is achieved by sending encrypted data to the cloud, conducting all cloud computations under HE, and then retrieving the encrypted result and decrypting it with the private key within a trusted environment. 
\begin{wrapfigure}{r}{0.6\textwidth}
 \includegraphics[width=0.6\textwidth]{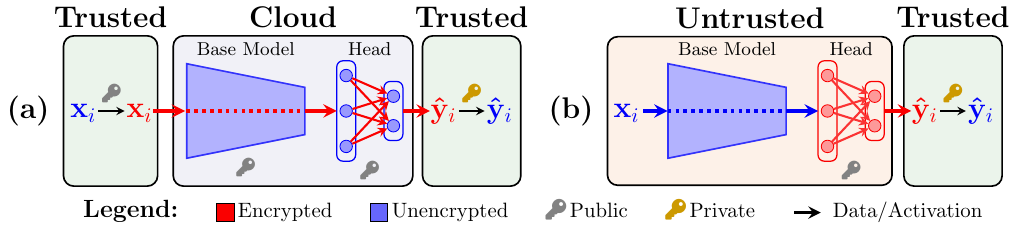}
  \caption{(a) Shows conventional HE usage in the context of performing TL neural network predictions. (b) Shows how RHE is implemented.}
  \label{fig:he_vs_rhe}
\end{wrapfigure}
Fig.~\ref{fig:he_vs_rhe}(a) shows this process as applied to transfer-learned network predictions. To implement encrypted computations, HE preserves addition and multiplication in a homomorophic way i.e., the encryption of data $m$ to obtain ciphertext $c = E(m)$ satisfies $E(m_1 + m_2) = E(m_1) + E(m_2)$ and $E(m_1 \cdot m_2) = E(m_1) \cdot E(m_2)$. So, any circuit (i.e. sequence) of additions and multiplications can run on such ciphertexts. The foregoing description is highly simplified: in practice, software packages support a range of operations apart from addition and multiplication such as subtraction, rotation and an ability to carry out operations in vectorized form. However, practicality and efficiency remains a major challenge and an active research area. HE requires injecting noise into operations; without this it lacks semantic security. Noise grows with algorithm/circuit depth, i.e. with the number of homomorphic computations (most critically multiplications) performed in succession with outputs from one layer feeding into inputs of the next. After reaching a certain circuit depth very expensive 'bootstrapping' reset operations must be used to remove the effect of the noise. Thus inference in deep neural nets can be prohibitively expensive under HE without high-end hardware. Under HE, shallow neural nets have per-prediction times ranging from tenths of a second to several seconds on standard hardware \cite{brutzkus2019low} while older methods take $\sim 10^2$ s \citep{encryption}. Deeper/larger nets (e.g., ResNets) need several thousands of seconds per (single-threaded) prediction \cite{pmlr-v162-lee22e}. \Cref{sec:more_on_HE} reviews more advanced accelerated approaches using high performance hardware.

\subsection{Description of Reverse Homomorphic Encryption (RHE)}
We make the following \textbf{assumptions for the TL process}: i) the transfer learned head-NN is fully connected with at most three layers and ii) no base-model finetuning is applied. One can produce highly accurate models operating within these constraints -- few-shot TL using even just a single-layer (linear) head-NN without base-model finetuning can achieve competitive performance  \citep{chen2018a,rethinking_fewshot}.

Our  novel implementation of  HE  which we call 'Reverse Homomorphic Encryption' (RHE) keeps (inference query) input data \textit{unencrypted} and instead encrypts weights and biases in the head-NN. Because only the head-NN depends on the TL data (the base model is frozen), encrypting these parameters blocks any reconstruction attack that inverts the weights back to the TL training data. Importantly, black-box attacks that rely on querying head-NN input–output pairs (such as that in Section \ref{sec:attack}) are also neutralized by RHE, since outputs are encrypted.

\textbf{RHE Workflow \quad} TL network training is done in a trusted environment (without HE). Then, we homomorphically encrypt the weights and biases of the head-NN using the public key and release the encrypted model (and public key) to an untrusted environment where an adversary may intend to recover TL training data. The inference process of the part-encrypted network is shown in \Cref{fig:he_vs_rhe}(b). It proceeds as follows: unencrypted images pass through the unencrypted base model in the standard way. Once the (unencrypted) deep features enter the encrypted head-NN, all operations run under HE (this being facilitated by use of the public key), producing encrypted logits while never exposing head-NN parameters. As well as white-box attacks, this defeats \textit{any} black-box attack for recovering the TL training data because an adversary is unable to query the classifier. Finally, a trusted party who operates within a safe environment  decrypts the logits with the private key  enabling readout of the unenciphered classifier logit outputs and decisions. Note that as the base model is not fine-tuned, observing its intermediate unencrypted layers reveals nothing about the TL training data. 

\subsection{Computational Efficiency and Accuracy of RHE}
\label{sec:rhe_efficiency}

The key reason for favourable RHE efficiency is that HE is applied only to a shallow head-NN ($\leq 3$ layers), avoiding costly bootstrapping. We also replace ReLU activations (expensive to implement in HE) with the square activations $f(x) = x^2$ which are cheap to implement in HE. As shown in \Cref{sec:relu_vs_square} and \Cref{tab:relu_vs_square}, this preserves accuracy, helped by the shallow depth of the head-NN. A further speed-up is achieved by skipping the HE {\textit{argmax}}-function to obtain hard-class decisions: encrypted logits are sent to a trusted party, who decrypts them and computes the classifier decisions in a trusted environment where HE is unnecessary. 

\textbf{Implementational details:} For our proof-of-concept RHE implementation we use CKKS homomorphic encryption \cite{cheon2017homomorphic} and the TenSEAL library (\cite{tenseal2021}).  \cite{furka2023guidelines} provides good guidelines on using the CKKS scheme with TenSEAL, in particular on how to select the CKKS parameters that affect the security, accuracy and computational efficiency of the encryption scheme. We set the target security level at 128 bits, i.e., the only known attacks for breaking the homomorphic encryption require at least $2^{128}$ computer operations. Using the guidelines of \cite{furka2023guidelines} we define the precision of our HE operations by specifying a 10 bit exponent and 23 bit fractional part and then set our encryption parameters accordingly. These parameters are chosen to ensure that the CKKS scheme meets the 128-bit security level and specified precision levels whilst at the same time achieving good efficiency. Further details are given in \Cref{sec:config_hme} and summarized by a flow chart in Fig.\ref{fig:hme_config}. Note that, as is evident from diagram \Cref{fig:he_vs_rhe}, we require encrypted-encrypted matrix-vector computations to allow encrypted weights to be combined with encrypted inputs, a situation that does not occur when using HE conventionally. We have extended TenSEAL's functionality to support such computations. Code and package installation instructions can be found here: \url{https://github.com/h-0-0/TenSEAL}

\textbf{Computational cost and accuracy:} For a given configuration of HE parameters the time complexity grows logarithmically with head-NN input dimension and as $\sim d\log_2d$ with $d$ being the size of the intermediate layers (assuming all the dimensions are less than half of the polynomial modulus degree, see Appendix~\ref{sec:config_hme}). A more complete picture is presented in \Cref{sec:rhe_inference_time}. In \Cref{tab:he_results_sequential} we provide (multithread) timings on a laptop: (a) for a range of network-head configurations including cases with one and two layers at and beyond what is typically sufficient for achieving competitive performance \citep{chen2018a,rethinking_fewshot}, (b) for full implementations (i.e. unencrypted base model operations and network-head HE operations combined,  as depicted in \Cref{fig:he_vs_rhe}(b)) of the RHE-secured versions of the transfer-learned networks analysed in \Cref{sec:attack}. After switching to using the square activation function, which as already discussed has negligible effect (\Cref{sec:relu_vs_square}), we found that using RHE caused \textbf{no degradation} of classifier performance in the MNIST, CIFAR-10 and CelebA experiments. Please refer to Appendix \ref{sec:more_on_HE} for a discussion of possible speedups.


\begin{table}
\centering
\begin{tabular}{|c|c|c|c|c|c|}
\hline
\multirow{3}{*}{\small{\textbf{Benchmark}}} & \multicolumn{2}{c|}{\small{\textbf{Head}}} & \multicolumn{3}{c|}{\small{\textbf{Mean Inference Time per Prediction}}} \\
\cline{2-6} 
& \multirow{2}{*}{\small{Input Dim.}} & \multirow{2}{*}{\small{Size}} & \multirow{2}{*}{\small{Base Model}} & \multicolumn{2}{c|}{\small{Head}} \\
\cline{5-6} 
& & & & \small{Unencrypted} & \small{Encrypted} \\
\hline
\small{MNIST} & \small{$256$} & \small{$10$} & $ 0.5\,ms$ & $0.008\,ms$ & $8\,ms$  \\
\small{CIFAR-10} & \small{$1280$} & \small{$10$} & $55\,ms$ & $0.01\,ms$ & $27\,ms$  \\
\small{CelebA} & \small{$2048$} & \small{$4-1$} & $30\,ms$ & $0.02\,ms$ & $56\,ms$  \\
\hline
\small{Linear} & \small{$2048$} & \small{$100$} & -- & $0.02\,ms$ & $57\,ms$  \\
\small{Linear} & \small{$2048$} & \small{$1000$} & -- & $0.04\,ms$ & $0.34\,s$  \\
\small{1 HL (binary)} & \small{$2048$} & \small{$128-1$} & -- & $0.03\,ms$ & $0.32\,s$  \\
\small{1 HL ($16$-way)} & \small{$2048$} & \small{$128-16$} & -- & $0.04\,ms$ & $0.98\,s$  \\
\small{1 HL ($100$-way)} & \small{$2048$} & \small{$128-100$} & -- & $0.07\,ms$ & $1.54\,s$  \\
\hline
\end{tabular}
\caption{Average time for a single prediction (no batching) averaged over 100 random input/weight choices. Hardware: Apple M4 Pro Laptop (8 cores). For linear heads the compute overhead due to RHE is comparable to forward propagation time through the base model. \label{tab:he_results_sequential}}
\end{table}

\section{Conclusions, Limitations and Future Research} 
\label{sec:summary}
We have demonstrated new white-box and black-box training-data reconstruction attacks against few-shot transfer-learned classifiers. We have extended these attack capabilities to a realistic threat model and shown that DP-SGD cannot defend against them without severely damaging classifier utility. The compelling evidence we provide should act as a serious warning to both security advisors and practitioners. Given the growth in TL applications, including use of sensitive training data, this represents a serious gap in state-of-the-art defense capabilities. Our proposed solution is to use a novel homomorphic encryption-based defense, RHE, which is fully secure against reconstruction, MIA and property-inference attacks and is also surprisingly computationally efficient when applied to TL applications.  Our laptop-only timings, using minor extensions of an open-source HE library, show that RHE is fully practical for inference without high-performance hardware. This contrasts starkly with most of the conventional HE usages in machine learning.  We discuss limitations of our work and scope for future research in Appendix \ref{app:limitations}.

\subsubsection*{Author Contributions}
TM and RA had mapped out most of the attack, defense and attack-evaluation ideas before HB joined during the latter phase of the project. TM and RA proposed the white-box attack method (extension of the shadow model method) and evaluation via ROC curves using Neyman-Pearson. RA, TM, HB proposed/developed the new black-box attack method via weight recovery and TM, HB worked on implementing it. TM proposed reconstructor NN architecture, implementation details, conducted data reconstruction experiments and ablation studies. TM derived proof of Theorem 1, created white-box GitHub repository, and carried out DP-SGD experiments (black-box and white-box). RA proposed the idea of RHE, HB carried out implementation of RHE (including an extension to the TenSEAL library), evaluation of timings, derivation of formulae for timings and provided the RHE GitHub repository. All authors contributed equally to writing and revising the manuscript.

 \subsubsection*{Acknowledgments}
We would like to thank His Majesty’s Government for fully funding Tomasz Maciazek and for contributing toward Robert Allison’s funding during the course of this work.  This work was carried out using the computational facilities of the Advanced Computing Research Centre, University of Bristol - \url{http://www.bristol.ac.uk/acrc/}. We also thank Tom Lovett from the Oxford University Mathematical Institute for useful discussions, comments and suggestions relating to this work. We also thank the EPSRC Centre for Doctoral Training in Computational Statistics and Data Science (COMPASS) for funding Henry Bourne’s PhD studentship.

\subsubsection*{Ethics Statement}

We have not conducted research involving human subjects or participants. We only work with open source published datasets (CIFAR, MNIST, EMNIST, CelebA) and cite them accordingly in the paper. All of these datasets are available for non-commercial research purposes. For our reconstructor NN architecture we have used and modified the residual generator from \cite{resgan1,resgan2} available under the MIT license (can be viewed on the respective project GitHub pages referenced in the papers). The potential harmful consequences of this work concern applications of our work where adversaries train their own reconstructor NNs to attack private datasets. However, none of our trained models can be directly applied to reconstruction of private datasets since our models were trained on public datasets and thus can reconstruct data from these open datasets only. What is more, we have proposed an encryption scheme that can prevent any data reconstruction attack.

\subsubsection*{Reproducibility Statement}
We describe the technical details of all the experiments in Appendices \ref{sec:config_hme}, \ref{app:architecture}, \ref{app:experiment_details} and \ref{app:dpsgd}. Furthermore, we provide code and its documentation that allows one to reproduce all the experiments from the paper (requires access to a GPU and several CPUs) under the links provided at the beginning of \Cref{app:experiment_details} and \Cref{sec:config_hme}. The repositories include files containing hyper-parameter configuration for each experiment. The documentation specifies all the commands that are required to re-run the experiments. All the data we have used is either public or can be reproduced using the provided code. We have also published some of our trained reconstructor NN models together with test examples.

\bibliography{iclr2026_conference}
\bibliographystyle{iclr2026_conference}

\appendix

\section{Comparison with Prior Work}
\label{app:prior_work}

\subsection{Prior Work on Training Data Reconstruction}
 We study the problem of (partially) inverting the mapping between the training set and trained model's weights. The work \citep{informed_adversaries} studies a similar problem to the one considered here by training a reconstructor NN. However, as explained above and specified in Appendix \ref{app:reconstruction_factors}, our attack works under a more realistic threat model and under much more general conditions. Our attack relies on a reconstructor NN giving reconstructions that are semantically close to the original images as opposed to the recently proposed gradient-flow based attacks \citep{haim1,haim2,ntk} which reconstruct the training data numerically as solutions to a certain algebraic minimization problem. Our attack is more flexible than the gradient-flow based methods which only work when the attacked model has been trained in a very specific way. For instance, they require long training with very low learning rates until good (approximate) convergence to a critical/KKT point (we do not need any of these assumptions). As such, the gradient-flow based attacks will be disrupted even by small amounts of noise in DP-SGD. The importance of the true- and false-positive attack success rates and ROC curves to evaluating deep learning privacy has been emphasized in the context of MIA \citep{mia_first_principles,mia_fdp}, however it has not been considered in data reconstruction attacks, especially in realistic threat models such as the weak adversary model. A hypothesis testing approach (different from ours) has been recently used to derive tighter reconstruction robustness bounds in the informed adversary model \citep{kaissis2023bounding}. 

 In this work we have used an image generator architecture as a reconstructor-NN. This approach has been widely adapted in the field of (generative) model inversion (GMI) which leverage GANs \citep{attribute2,generative1} or diffusion models \citep{hard_label2} to reconstruct information about the training set of the target classifier. The goal of GMI is to recover representative samples from the generating distribution of the classifier's training data (general representative `prototypes' of data samples associated with given labels) \citep{model_inversion2,variational,hard_label1}. This is different from our goal which is to obtain \emph{exact instances} from the training set. For instance, when attacking a face recognition model, GMI would generate fake samples of the face of the target person \citep{variational} (potentially enabling the attacker to fake that person's identity), whereas our method would output faithful reconstructions of the actual examples that have been used the attacked model training. This is a crucial difference -- the notion of a `positive reconstruction' would not provide useful information in the context of GMI, since it's goal is to obtain data `prototypes' rather than exact data points. Thus, the reconstruction TPR, FPR and the related ROC curves which underpin the evaluation of exact/faithful reconstruction studied in this work would not provide useful information for evaluating GMI. GMI provides an alternative methodology allowing one to sample representative images of high quality (i.e. realistically looking and not blurred)  both in the white-box and black-box (including hard-label black-box) settings \citep{hard_label2,hard_label3} which can also reveal potentially useful characteristics of the training data of concern to privacy, but not near copies of training data instances as provided by our method. Note that our RHE scheme defends against any GMI attack on TL training data by encrypting all the information about the training data.

 \cite{bb_logits} study exact logit-based black-box data reconstruction attacks in an online learning setup, where the target/attacked model's weights are being continuously updated using small amounts of new data. This setting is similar to TL in the sense that it considers effective training on small amounts of data. \cite{bb_logits} also uses the shadow model method to train a reconstructor-NN which takes as its input the output logits of the target model when queried on a fixed input data sample. This is different from our black-box setting which uses only hard-label decisions of the target model rather than its output logits, thus assumes that the adversary has access to much less information. \cite{bb_logits} do not address the possibility of false positive reconstructions and do not test their attack against DP-SGD. To the best of our knowledge, our work is the first to realize the more difficult hard-label black-box exact training data reconstruction attack which produces high quality faithful reconstructions and is robust against DP-SGD.

\subsection{Prior Work on Homomorphic Encryption}

As we have explained in Section \ref{sec:rhe_efficiency} the computational efficiency of our RHE scheme relies on the fact that RHE applies the minimal amount of encryption that is sufficient to fully defend the model against training data reconstruction (and other related) attacks in TL. In particular, since our goal is not to protect the input queries during inference (this data is completely separate from the TL training data and assumed to not be private in RHE), in RHE we do not encrypt the base model outputs or its intermediate activations which saves a great amount of compute. Note that if one wants to keep user's input queries private while outsourcing computations to the cloud/server, one should also encrypt all the intermediate activations and outputs of the base model since leaking any of this information leaves the model susceptible to reconstruction using inversion from deep activations \citep{intermediate_inversion1,intermediate_inversion2}. 

Whilst conventional HE has been used to homomorphically encrypt some or all of model's weights when running on encrypted data (the so-called oblivious or partially oblivious inference \citep{oblivious,partially_oblivious}), such an approach in traditional settings is prohibitively expensive without access to high-end compute hardware, unless some simplified neural networks are used \citep{bnn_oblivious}. They often also need to use multi-party computation \cite{liu2017oblivious, rathee2020cryptflow2, srinivasan2019delphi, juvekar2018gazelle}. 

HE methods in TL often focus on the privacy challenges arising when a server stores a model and lets the clients query results for their private input data and enables clients to perform transfer-learning on their private training datasets \citep{priv_gd,hetal}. Within this framework, schemes encrypting only the transfer-learned weights of the final classification layer and the base-model outputs (i.e. the deep features) have been proposed both for TL training and inference by \cite{hetal}. More precisely, a user/client who wants to protect both the TL training data and their query data at inference computes the base-model outputs locally in their trusted environment in an unencrypted way, encrypts them and sends them to the (untrusted) server/cloud who performs both the TL training and inference on the ciphertexts, sending the encrypted output logits and the encrypted weights of the final classification layer to the client for validation. We re-emphasize that this is different from our framework where the query inputs at inference are not private and the goal is the protection of the TL training data only, see \Cref{fig:he_vs_rhe}. This means that:
\begin{enumerate}
\item The set of use cases of RHE is disjoint from the set of use cases of HETAL by \cite{hetal} since we are not keeping the inputs at inference private. Note that this often occurs in practice e.g. in public social media content moderation or when the TL training uses licensed data (but the model is queried on public data).
\item In RHE TL training is done without encryption within a trusted environment requiring the trusted user to have expertise in TL. \cite{hetal} proposes a HE scheme for TL training in an untrusted environment instead which allows the client to outsource the training to an expert service provider. However, the service provider must share the encrypted intermediate TL weights/logits with the client who calculates the validation loss to decide when to stop the training. This requires the client to have some limited level of expertise in TL. The service provider also has limited knowledge about the underlying problem since the data they receive is encrypted, thus they may  not be fully capable of applying their expertise to TL training.
\item The emphasis of \cite{hetal} is to protect training data during training which is done in an untrusted environment, but our emphasis is to defend against training data reconstruction attacks at inference. 
\item In RHE at inference the deep features remain unencrypted, enabling us much more flexible use cases since any party can carry out predictions, while in the setting of \cite{hetal} the query input data remains private, thus only the client can query the model.
\item The setting of \cite{hetal} requires the client to extract the deep features locally which burdens the client with doing potentially expensive forward propagation through the complex base model e.g., a vision transformer which requires high-end hardware to run \citep{pmlr-v202-dehghani23a,oquab2024dinov}.
\item In RHE the entire TL model (unencrypted base-model plus the encrypted head-NN) is deployed within an untrusted environment (e.g. deployed for use by multiple teams at different sites within an organisation or kept on an untrusted server).
\item By focusing purely on the problem of inference, we are able to put fewer restrictions on the head-NN architecture (at most three fully connected layers with square activations ) when compared to \cite{hetal} which considers fine-tuning of only one final classification ({\textit{softmax}}) layer.
\end{enumerate}

This approach allows us to identify the minimum amount of encryption required to implement a fully effective and targeted defense against TL training data reconstruction during inference. This efficient minimalist approach permits prediction times that are comparable to the forward propagation through the (unencrypted) base model (we avoid repetitive bootstrapping and expensive activation functions that are entailed by applying HE throughout the whole network), see \Cref{sec:more_on_HE} for more details. 

 \section{Limitations and Future Work}
\label{app:limitations}

This work has focused on attacks and defense in transfer-learned image classification. This is an important and widespread area of application in its own right and one in which the availability of large and diverse image data-sets for construction of base models facilitates use of transfer-learning across a range of important applications. This leaves scope  for future research on the use of our attack and defense techniques in other domains such as text and speech.

One limitation of our data reconstruction attacks is that they become less effective as the size of the attacked model's smallest per-class training set increases.  Our attacks have been shown to be effective in cases where the minority class has 1-4 training images. When the class from which we recover a reconstruction becomes larger than this the reconstruction $FPR$ increases making our attacks unreliable.  It is not clear whether this threshold reflects a fundamental limitation for any attack carried out under the weak adversary threat-model. Further attack enhancements will allow us to investigate whether it can be shown that larger per-class cases can be attacked without DP defense.  Meanwhile we believe that our results as given, and the inability of DP-SGD to effectively defend against our attacks, already provide a compelling reason for practitioners of few-shot TL to look to RHE as a fully effective defense.

Another limitation of our current attacks concerns the architecture of the reconstructor NN and the related amount of compute and memory required to train it. Recall that the input of the reconstructor NN consists of the concatenated and flattened weights and biases of the image classifier layers that have been trained during the transfer learning. This is typically only one or two final layers, however as explained in Table \ref{tab:rec_summary} the resulting input dimensions $D_\Theta$ can easily be of the order of $10^4$. This means that most of the parameters of the reconstructor NN are contained in its first convolutional layer. The exact number of parameters in this layer is equal to  $4S(16 D_\Theta+1)$ where $S$ is the internal size of the reconstructor NN (equal to $256$ or $512$). Thus, the first convolutional layer contains $\sim 10^8$ parameters. This makes the training relatively inefficient. In future work we will explore more efficient ways of processing reconstructor NN input. For instance, instead of concatenating all the weights of the trainable layers one could take inspiration from Deep Sets \citep{NIPS2017_f22e4747} and process each neuron separately by a sequence of convolutions and sum the result over the neurons. Similar approach has been used to construct NNs used for property inference attacks \citep{property_inference1}. By processing the neurons in a permutation-invariant way, this could also improve the effectiveness of our black-box attack by addressing the problem of neuron permutation ambiguity in the weight recovery via the hard-label black-box weight extraction.

Another limitation of our work is that we have not proposed any defense for defending the (large) training dataset of the base model. In practice, the base models are generally trained on public data in which case no defenses are required. However, even if the base model is trained using a sufficiently large private dataset, DP-SGD may readily provide sufficient defense (possibly subject to prior pretraining on a related public dataset), see for instance \cite{pmlr-v202-sander23b,pmlr-v202-ganesh23a}.

Whilst the per prediction times of RHE-defended networks have been shown in section \ref{sec:rhe_efficiency} to be practical there is still scope to improve speed both on low-spec and high-spec hardware implementations.  Section \ref{sec:more_on_HE} provides details.  Such improvements only seem  likely to be of potential value for the more challenging, and less commonly used, instances of 2 and 3 layer fully connected network heads.

At first sight the need to deal with encrypted classifier output under RHE might appear to be an unnecessary encumbrance. However this should not be viewed as a limitation since our demonstration of black-box attacks shows encryption of classifier output to be a necessity. One unavoidable "limitation" is that the trusted party must not reveal the class-decision output to an untrusted party, e.g by allowing the untrusted party to monitor their follow-on action. However this is a fundamental and  unavoidable requirement given that DP measures are unable to block black box attacks without unacceptable damage to classifier utility 

\section{More On Homomorphic Encryption}
\label{sec:more_on_HE}
HE allows operations on encrypted data. The most dominant HE schemes are CKKS encryption by \cite{cheon2017homomorphic}, BGV encryption by \cite{yagisawa2015fully} and BFV encryption by \cite{fan2012somewhat}. BGV and BFV encryption are quite similar and so are often mentioned somewhat interchangeably. The reason these schemes dominate is because they are fast, allow one to pack multiple numbers (vectors) into a cyphertext, can carry out encryption on fairly deep networks of operations without the need for costly `bootstrapping' operations and are widely supported by software libraries. In this paper we focus on the CKKS scheme which supports the following operations on real numbers expressed in floating point (rather than integer) form: Addition, Multiplication, Subtraction and Rotation. We focus on HE performance in inference and not during neural net training which we assume is carried out in a trusted environment using unencrypted operations. \cite{gilad2016cryptonets} were one of the first to use a neural network to perform inference on HE data. Later this was extended by \cite{brutzkus2019low} leading to dramatic speedups in the time for inference and reduction in memory footprint by better handling vectorization. The CKKS with TENSeal \citep{tenseal2021} draws upon their Stacked Vector–Row Major scheme for matrix-vector computations.   

HE is widely regarded as too slow, even for inference. Most of this slowdown comes after reaching maximum multiplicative depth, in neural networks this manifests itself as network depth, at which point one needs to `bootstrap' which is extremely computationally expensive, see \cite{kang2020bootstrapping} for some timings which in some cases were found to exceed $100$ seconds. The original bootstrapping algorithm for CKKS can be found in \cite{cheon2018bootstrapping}. A newer variant which is implemented in Microsoft's SEAL library \citep{sealcrypto} is the work of \cite{chen2019improved}. \cite{gong2024practical} provide a survey detailing work to speed up HM addition, HM multiplication, number-theoretic transform (used to perform HM operations) and bootstrapping, including the use of hardware acceleration on GPUs for example. However, much of this work remains in progress and isn't easily implementable. 

By staying within the confines of a three-layer fully connected network with ReLU activations replaced by square activation functions we bypass the need to invoke bootstrapping in our HE implementations. This approach is similar to the one adopted in leveled HE schemes. Whilst there are ways to speed up inference by taking advantage of parallel computing environments and specialized hardware our aim is to demonstrate practicality for the user who may not have access to these facilities. 

\cite{gong2024practical} provide a survey on accelerating HE, we will provide a short discussion now. Without any hardware acceleration there are things one could implement in order to speed up computation. One could look at the configuration of the HE scheme \cite{cheon2024grafting}. One could also look at implementing faster versions of computations used in neural networks such as implementing faster encrypted-plain \citep{bae2024plaintext} and encrypted-encrypted \citep{park2025ciphertext} matrix multiplications (these works have been successful in speeding up times when matrices are very large). 

Recently lots of work has also been done on accelerating HE operations on specialized hardware. For example lots of work has been done on making HE work on GPUs \citep{choi2024cheddar,fan2023tensorfhe, agullo2025fideslib, shivdikar2023gme, jung2021over, fan2025warpdrive}. Work has also been done on running HE on FPGAs, which have been able to achieve massive speedups, such as the work of \cite{samardzic2021f1} and \cite{xu2025fast}. Some have even tried accelerating HE using ASICs \citep{samardzic2022craterlake, kim2023sharp, aikata2023reed}.

If one wants to use deeper networks there are also recent works which look to make this more feasible. The work of \cite{yan2025faster} propose a novel rescaling operation that saves on the depth that bootstrapping requires and speeds it up. \cite{jung2021over} are able to dramatically speed up bootstrapping by multiple orders of magnitude (which is critical should you want deeper networks) using GPUs.

\subsection{Configuring Homomorphic Encryption}
\label{sec:config_hme}
To encrypt the head of our network we use CKKS encryption and the TenSEAL library (\cite{tenseal2021}). There are two parameters which need to be set to produce a configuration for HE. These parameters effect the security, performance, and computational depth of our encryption scheme. To reproduce HE experiments presented in this paper, please use the following repository:
\begin{itemize}
\item Reverse Homomorphic Encryption 
\url{https://github.com/h-0-0/RHE}
\end{itemize}

The first is the polynomial modulus degree ($N_{D}$) which defines the degree of the polynomial ring used in the encryption scheme, it must be some power of 2. The larger the polynomial degree the higher security level provided. For a larger polynomial modulus degree we also get a larger slot capacity i.e., the number of values that can be packed into a single ciphertext and which equals $N_{D}/2$. By packing our ciphertexts we also enable Single Instruction, Multiple Data (SIMD) operations. The higher the degree, $N_{D}$, the deeper the computational circuits we can in theory use without bootstrapping \footnote{In TenSEAL there is a hard limit of circuit depth set to 7 without bootstrapping (hence we confine ourselves to a maximun of 3 layers, which is more than we need). Although in theory we can have arbitrary circuit depth by changing $N_{D}$ in practice this is infeasible. With a larger degree comes increasingly longer computation time as $N_{D}$ must be some power of 2 and a larger $N_{D}$ leads to longer computation times for HM operations.}.

The second parameter is the coefficient modulus bit sizes which is a list of integers, $Q=[Q_{S}, Q_{M}, ..., Q_{S}]$ where there are $\mathcal{M}$ copies of $Q_{M}$. $\mathcal{M}$ is the depth of the multiplicative circuit, ie. it tells us how many multiplicative operations can be performed in sequence before bootstrapping is required. In TenSEAL we must have $Q_{M} \geq 20$, $Q_{S} \geq Q_{M}+10$, $Q_{M}, Q_{S} \leq 60$. The values of $Q_{S}$ and $Q_{M}$ affect the precision maintained during computations (\citep{furka2023guidelines}). Specifically the bit-precision of the exponent, $\eta_{I}$, and the bit-precision of the significand, $\eta_{D}$, of each number stored in the ciphertext. 

The total sum of all bit sizes, $Q_{max}= 2Q_{S} + \mathcal{M}Q_{M}$ (see \Cref{fig:hme_config}), must remain below a security bound for the chosen polynomial degree, $N_{D}$. The security level, ($\lambda$), is measured in bits and a commonly used value is 128 bits meaning that with known attack capabilities an attacker would need to perform at least  $2^{128}$ computational operations to break the encryption. In practice this is computationally infeasible. 

\cite{furka2023guidelines} provides good guidelines on using the BFV and CKKS schemes with TenSEAL. In particular, one should have as short a list of coefficient modulus bits as possible based on the number of homomorphic operations that one will perform. The same goes for $N_{D}$ which also determines slot capacity (equals $N_{D}/2$). These choices are also subject to ensuring that the desired security level ($\lambda$) is achieved. We chose our configuration for HE using the recipe in Section E in \cite{furka2023guidelines} as detailed in \Cref{fig:hme_config}. For further reading on parameter choices please refer to \cite{chase2017security, albrecht2021homomorphic}. 

Once the HE has been configured, to perform RHE all one must do is encrypt the weight matrices and bias vectors in the head using the public key. If the input to a layer is larger than $N_{D}/2$ or if $\lfloor N_{D}/2k \rfloor$, where k is the input size to a layer, is greater than 1 for any layer then one must perform chunking, this is made clear in \Cref{sec:rhe_inference_time}. Chunks can be computed in parallel to speed up the forwards pass. Otherwise one simply encrypts bias vectors and flattens matrices row-wise before encrypting the resulting vector. For details on how to perform a forward pass through the encrypted head please refer to \Cref{sec:rhe_inference_time}.

\begin{figure}
\centering
\resizebox{\linewidth}{!}{%
\begin{tikzpicture}[
    flowblue/.style={draw=blue!60, fill=blue!10},
    flowgreen/.style={draw=green!60!black, fill=green!10},
    floworange/.style={draw=orange!75!black, fill=orange!10},
    flowpurple/.style={draw=purple!60!black, fill=purple!10},
    flowteal/.style={draw=teal!70!black, fill=teal!10},
    flowbox/.style={rectangle, rounded corners=6pt, very thick, align=left, inner sep=10pt, minimum height=1.6cm, text width=9.6cm, drop shadow, font=\Large},
    flowarrow/.style={very thick, ->, >=Stealth, draw=black!75},
    stepbadge/.style={circle, inner sep=1.5pt, minimum size=6mm, font=\bfseries, draw=black!15, drop shadow},
    node distance=14mm
]

\node[flowbox, flowblue] (s1a) {Select fully connected neural network architecture with square activation function (maximum 3 layers), $D=\{ d_{1}, \ldots, d_{L} \}$ d smaller than};
\node[flowbox, flowblue, below=6mm of s1a] (s1b) {Select security level $\lambda$};
\node[flowbox, flowblue, below=6mm of s1b] (s1c) {Set binary precision for encrypted messages: exponent $\eta_{I}$ and significand (decimal) $\eta_{D}$};
\node[flowbox, flowblue, below=6mm of s1c] (s1d) {Compute $\mathcal{M} = 1 + 3(L-2)$};
\node[draw=none, fit=(s1a) (s1d)] (s1group) {};
\node[stepbadge, fill=blue!60, text=white] at ($(s1a.north west)+(-3.5mm,3.5mm)$) {1};

\node[flowbox, floworange, anchor=north west] (s2a) at ($(s1group.north east)+(6mm,0)$) {Follow steps for CKKS in Section E \citep{furka2023guidelines} with parameters from previous step};
\node[flowbox, floworange, below=6mm of s2a] (s2b) {Receiving:\\[2pt]
\begin{itemize}
  \setlength{\itemsep}{2pt}
  \item Polynomial modulus degree, $N_{D}$
  \item Coefficient modulus bit sizes, $[\mathcal{Q}_{S},\, \mathcal{Q}_{M},\, \ldots,\, \mathcal{Q}_{S}]$
  \item Where we have $\mathcal{M}$ copies of $\mathcal{Q}_{M}$
\end{itemize}};
\node[draw=none, fit=(s2a) (s2b)] (s2group) {};
\node[stepbadge, fill=orange!80!black, text=white] at ($(s2a.north west)+(-3.5mm,3.5mm)$) {2};

\node[flowbox, flowpurple, anchor=north west] (s3a) at ($(s2group.north east)+(6mm,0)$) {Set up encryption with $N_{D}$ and $[\mathcal{Q}_{S},\, \mathcal{Q}_{M},\, \ldots,\, \mathcal{Q}_{S}]$ from previous step};
\node[flowbox, flowpurple, below=6mm of s3a] (s3b) {Encrypt the head according to \Cref{sec:more_on_HE}};
\node[draw=none, fit=(s3a) (s3b)] (s3group) {};
\node[stepbadge, fill=purple!70!black, text=white] at ($(s3a.north west)+(-3.5mm,3.5mm)$) {3};

\coordinate (flow_left_cap) at ($(s1group.west)+(-8mm,0)$);
\coordinate (flow_bottom_cap) at ($(s3group.south)+(0,-6mm)$);

\begin{scope}[on background layer]
  \node[rounded corners=16pt, fill=black!3, inner xsep=28pt, inner ysep=26pt, fit=(flow_left_cap) (s1group) (s2group) (s3group) (flow_bottom_cap)] (bg) {};
\end{scope}

\end{tikzpicture}
}
\caption{Figure detailing steps for configuring RHE with TenSEAL.}
\label{fig:hme_config}
\end{figure}

\section{Replacing the ReLU activation function}
\label{sec:relu_vs_square}
For all the benchmarks used in the attack we investigated what happens to performance when we replaced the ReLU activation function with the square activation function. To make a fair comparison we conducted an iterative grid search over hyperparameters for both activation functions for each benchmark and present in \Cref{tab:relu_vs_square} the best results from the grid search. As shown in \Cref{tab:relu_vs_square} the test accuracies achieved by both are very similar so that swapping the ReLU with the square activation function in the TL network-head appears to have negligible (positive or negative) impact on classifier accuracy.

Briefly commenting on optimal hyperparameters: networks trained with the square activation function performed better when using smaller learning rates and a larger number of epochs than the optimal hyperparameters found for ReLU. When training on benchmarks with $N=40$ we kept the optimal hyperparamters found for $N=10$ and only did a search over the number of epochs. For $N=40$ we found both activation functions needed more epochs with square requiring a larger number of epochs than needed for ReLU most of the time.

\begin{table}
  \centering
  \begin{tabular}{c|cc|cc}
      \toprule
    & \multicolumn{2}{c|}{$N=10$}  &  \multicolumn{2}{c}{$N=40$} \\
    \cmidrule(r){2-5}
   Experiment & \textbf{ReLU}  & \textbf{Square} &  \textbf{ReLU} & \textbf{Square}\\
   \midrule
MNIST  & $83.5 \pm 4.1\, \% $ & $81.4 \pm 4.1\, \%$ & $91.6 \pm 1.8\, \%$ & $88.5 \pm 7.3\, \%$ \\
CIFAR-10 & $90.7 \pm 3.2\, \%$ & $91.1 \pm 2.7\, \%$ & $97.0 \pm 1.3\, \%$ & $97.2 \pm 0.7\, \%$ \\
CelebA  & $92.3 \pm 3.1 \, \%$ & $92.1 \pm 2.9\, \%$ & $94.2 \pm 3.1 \%$ & $94.2 \pm 2.9\, \%$ \\
    \bottomrule
\end{tabular}
\caption{Accuracy on validation set averaged over $100$ models each trained on a random sample (size $N=10$ or $N=40$) of the training data. Shown are best results after a systematic grid search over hyperparameter values.}
\label{tab:relu_vs_square}
\end{table}

\section{RHE Inference Time}
\label{sec:rhe_inference_time}
One can completely and uniquely describe a fully connected neural network architecture by a sequence of dimensions $D = \{d_0, d_{1}, ..., d_{L} \}$ where $L$ is the number of layers (i.e. $L-1$ is the number of hidden layers) and $d_l$, $l=1,\dots,L$ is the number of neurons in $l$-th layer. The first element of $D$, $d_{0}$ describes the dimension of the input to the first layer (i.e. the number of deep features) encountered in the head, the last element $d_{L}$ describes the output dimension of the network. Under this notation, each neuron in the $l$-th layer, $l=1,\dots,L$, has $d_{l-1}$ weights.

The operations we need consider in order to build a formula for timings are as follows: 
\begin{enumerate}
    \item Encrypted-matrix-plain-vector multiplication (supported in TENSeal)
    \item Encrypted-matrix-encrypted-vector multiplication (not supported in TENSeal, but required for RHE, find our implementation here:
    \url{https://github.com/h-0-0/TenSEAL}
    \item Adding of bias term to encrypted vector (supported in TENSeal)
    \item Repacking of chunked computation (supported in TENSeal)
    \item Square activation function (supported in TENSeal)
\end{enumerate}
Note that we ignore times for initializing vectors as this remains fairly constant and is independent of network architecture. An encrypted-matrix-plain-vector multiplication is computed using \Cref{alg:matrix-vector-comp} and following the steps of the algorithm one can write the following formula for its computation time:
\begin{equation}
    \lceil r / \lfloor N_{D}/2k \rfloor \rceil [ t_{PM} + (\log_{2}(k) -1) (t_{PA} + t_{R})]
\end{equation}
We use the fact that $n_{c} = \lceil r / \lfloor N_{D}/2k \rfloor \rceil$. Where $r$ is the number of rows/neurons in the layer, k is the number of columns or size of the input for the layer, $t_{R}$ is the time for a HM rotation and $t_{PM}$ and $t_{PA}$ are the times for a encrypted-plain HM multiplication and addition respectively.

Similarly an encrypted-matrix-encrypted-vector multiplication has equation:
\begin{equation}
    \lceil r / \lfloor N_{D}/2k \rfloor \rceil [ t_{EM} + (\log_{2}(k) -1) (t_{EA} + t_{R})]
\end{equation}
Where $t_{EM}$ and $t_{EA}$ are the times for an encrypted-plain HM multiplication and addition respectively. Adding the bias term to an encrypted vector takes $\lceil r / \lfloor N_{D}/2k \rfloor \rceil \cdot t_{EA}$. Observing \Cref{alg:vector-packing} one can see the time for repacking of a chunked computation is:
\begin{equation}
    \lceil r / \lfloor N_{D}/2k \rfloor \rceil \cdot [t_{EM} + t_{EA} + t_{R}]
\end{equation}
Finally the square activation function takes $t_{EM}$. Noting that the input dimension to a layer is padded to the next power of two we will write our input dimensions as $\hat{d_{i}}=2^{\lceil log_{2}(d_{i})\rceil}$ for parts of the equation that decide the chunking. Using this and the above equations we can provide the following formula:
\begin{equation}
    C(i) = 
    \begin{cases}
        \lceil d_{i+1}/ \lfloor \frac{N_{D}}{2 \hat{d}_{i}} \rfloor \rceil [ t_{PM} + (\log_{2}(\hat{d}_{i}) -1) (t_{PA} + t_{R}) + t_{EA}] & \mathrm{for}\quad i=0
        \\
        t_{EM} + \lceil d_{i+1}/ \lfloor \frac{N_{D}}{2 \hat{d}_{i}} \rfloor \rceil 
        [ 2t_{EM} + (\log_{2}(\hat{d}_{i})) (t_{EA} + t_{R}) + t_{EA}] & \mathrm{for}\quad L-1\geq i\geq1
    \end{cases}
\end{equation}
Where the computational cost is given by $\sum_{i=0}^{L-1} C(i)$. One can see that the exact time complexity of RHE is complicated, however one can derive a simple practical upper bound (mentioned in Section \ref{sec:rhe_efficiency}). Assuming that $N_{D}-4d_{0}>1, N_{D}-4d_{\max}>1$ we observe that:
\begin{equation}
  \lceil d_{i+1}/ \lfloor \frac{N_{D}}{2 \hat{d}_{i}} \rfloor \rceil < 2\cdot d_{i+1}
\end{equation}
Also noting that $t_{PM}<t_{EM}, t_{PA}<t_{EA}$ we can write:
\begin{equation}
    C(i) < 
    \begin{cases}
       d_{1} [ t_{EM} + \log_{2}(d_{0}) (t_{EA} + t_{R}) + t_{EA}] & \mathrm{for}\quad i=0
        \\
        t_{EM} + d_{i+1}
        [ 2t_{EM} + \log_{2}(2d_{i}) (t_{EA} + t_{R}) + t_{EA}] & \mathrm{for}\quad L-1\geq i\geq1
    \end{cases}
\end{equation}
Setting $d_{\max}=\max_{i=1,...,L} d_{i}$:
\begin{equation}
    C(i) < 
    \begin{cases}
       d_{1} [ t_{EM} + \log_{2}(d_{0}) (t_{EA} + t_{R}) + t_{EA}] & \mathrm{for}\quad i=0
        \\
        t_{EM} + d_{max}
        [ 2t_{EM} + \log_{2}(2d_{max}) (t_{EA} + t_{R}) + t_{EA}] & \mathrm{for}\quad L-1\geq i\geq1
    \end{cases}
\end{equation}
So the computation time of the first layer grows at most logarithmically with the input dimension, $d_{0}$, and linearly with its output dimension. The other layers grow at most at a rate of $d_{\max}\log(d_{\max})$. Usually the size of the hidden layers are larger than the output dimension and are set to be the same size. In which case we can say that the computation time of the first layer grows logarithmically with input dimension, linearly with hidden layer dimension and that the other layers grow at most with $d \log_{2}(d)$ where d is the size of the hidden layers.

One thing to note is that if one was to exceed in the input dimension to any layer the maximum slot capacity,  $N_{D}/2$, extra steps are required. In the above analysis and in \Cref{alg:matrix-vector-comp,alg:vector-packing} we assume that this isn't the case. This  assumption isn't constraining as it's unlikely to crop up in TL. Even in our tests for networks at the extreme end of what you might expect to use in TL, all of the layers input dimensions fit well within the bounds of the maximum number of elements according to polynomial modulus degree. Even if this assumption is violated it is quite easy to fix:  one would either have to increase the polynomial modulus degree or perform another level of chunking.

\begin{algorithm}[h]
\SetAlgoLined
\LinesNumbered
\KwIn{Vector $v$, polynomial modulus degree $N_{D}$, Matrix $M$ with $r$ rows and $k$ columns, broken into $n_c = \lceil r / \lfloor \frac{N_{D}}{2k} \rfloor \rceil$ chunks where each chunk contains $\lfloor N_{D}/2k  \rfloor$ rows, $k$ number of columns and $d$ such that $k=2^d$}

Create vector $v_{stacked}$ with $\lfloor N_{D}/2k  \rfloor$ copies of $v$ stacked.

\For{$i=1$ \KwTo $n_c$}{
    Fetch $m_{i}$, the encrypted vector containing the rows for the i-th chunk.

    Compute $u = m_i \odot v_{stacked}$ (element-wise multiplication)
    
    Initialize $result_i = u$
    
    \For{$j=1$ \KwTo $d-1$}{
        $result_i = result_i + \text{rotate}(u, 2^j)$
    }
}

\Return{Concatenate all $result_i$ for $i=1,\ldots,n_c$}
\caption{Matrix-Vector computation}
\label{alg:matrix-vector-comp}
\end{algorithm}

\begin{algorithm}[h]
\SetAlgoLined
\LinesNumbered
\KwIn{List of vectors $\{v_1, v_2, \ldots, v_{\ell}\}$, vector size $n$, number of vectors $\ell$}
\textbf{Initialize:} Create packed vector $p$ of size $N_{D}/2$ with all zeros

Create mask $mask$ of size $N_{D}/2$ with ones in first $n$ positions, zeros elsewhere

Compute $output\_size = n \times \ell$

Compute $mask\_shift = output\_size - n$

\For{$i=1$ \KwTo $\ell$}{
    Compute $masked\_vector = v_i \odot mask$ (element-wise multiplication)
    
    Update $p = p + masked\_vector$
    
    Rotate $mask$ by $mask\_shift$ positions: $mask = \text{rotate}(mask, mask\_shift)$
}

\Return{Packed vector $p$}
\caption{Vector Packing}
\label{alg:vector-packing}
\end{algorithm}


\section{Membership Inference Security Game, Reconstruction $FPR$, $TPR$ and ROC Curves}
\label{app:security_game}
The Lemma below motivates the need for considering both $TPR$ and $FPR$ when evaluating data reconstruction attacks.
\begin{lemma}
\label{lemma:trivial_attack}
Let $\ell:\ \mcZ\to [0,1]$ be a normalised error function and $N$ be the size of the training dataset. Assume there exists $Z_0\in\mcZ$ be such that $\kappa_\tau(Z_0):=\PP_{Z\sim \pi}\left[ \ell(Z,Z_0)\leq \tau\right]$ satisfies $0< \kappa_\tau(Z_0) <1$ for every error threshold $\tau\in]0,1[$. Define the {\it baseline attack} $\mcR_0$ as $\mcR_0(\theta)=Z_0$ for any input weights $\theta$. The cumulative $TPR$ and $FPR$ for this attack have the following pointwise limits for every $\tau\in]0,1[$.
\[
TPR_{cum}(\tau)\xrightarrow{N\to\infty} 1, \quad FPR_{cum}(\tau)\xrightarrow{N\to\infty} 1.
\]
\end{lemma}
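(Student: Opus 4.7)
The key observation is that the baseline attack $\mcR_0$ is a constant map: $\mcR_0(\theta)=Z_0$ regardless of the weights. In particular, the reconstruction is independent of the training mechanism $M$, so under both $H_0$ and $H_1$ the random variable $\ell_{\min}$ has the same law, namely that of $\min_{i\leq N}\ell(Z_i,Z_0)$ with $Z_1,\dots,Z_N$ drawn i.i.d.\ from $\pi$. Hence $\rho_0=\rho_1$, which collapses the two cumulative rates: $TPR_{cum}(\tau)=FPR_{cum}(\tau)$ for every $\tau$, and it suffices to prove a single limit.

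First, I would use independence of the $Z_i$ under $\pi_N$ to write
\[
TPR_{cum}(\tau)=FPR_{cum}(\tau)=1-\bigl(1-p_\tau\bigr)^N,\qquad p_\tau:=\PP_{Z\sim\pi}\bigl[\ell(Z,Z_0)<\tau\bigr].
\]
The claim then reduces to showing $p_\tau>0$ for every $\tau\in(0,1)$, since in that case $(1-p_\tau)^N\to 0$ as $N\to\infty$.

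Second, to obtain $p_\tau>0$ I would bridge the small gap between the hypothesis (which bounds $\kappa_\tau$, defined with a non-strict inequality $\leq$) and the strict inequality appearing in the definitions of $TPR_{cum}$ and $FPR_{cum}$. Given $\tau\in(0,1)$, pick any $\tau'\in(0,\tau)$; then $\{\ell(Z,Z_0)\leq\tau'\}\subseteq\{\ell(Z,Z_0)<\tau\}$, so $p_\tau\geq\kappa_{\tau'}(Z_0)>0$ by assumption, and the conclusion follows.

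The lemma is elementary and I do not anticipate any genuine technical obstacle. The only mildly delicate point is the $\leq$ versus $<$ mismatch between the hypothesis and the definitions; passing to an auxiliary threshold $\tau'<\tau$ side-steps this cleanly because the assumption on $\kappa$ is posited uniformly on all of $(0,1)$. Conceptually the statement is just the standard observation that an i.i.d.\ sample from $\pi$ of growing size almost surely eventually enters any fixed set of positive $\pi$-mass around $Z_0$; the point of recording it here is precisely to warn that a defender who reports only $TPR_{cum}$ cannot distinguish a real attack from this trivial baseline for large $N$, which motivates the $TPR$-at-low-$FPR$ methodology developed in the rest of Section~\ref{sec:rero}.
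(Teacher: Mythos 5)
Your proof is correct and follows essentially the same route as the paper's: factorize $\PP\left[\min_{i}\ell(Z_i,Z_0)\geq\tau\right]$ over the i.i.d.\ sample to get $1-(1-p_\tau)^N$ and let $N\to\infty$. You are in fact slightly more careful than the paper on two points it glosses over: you observe explicitly that the constant attack makes $\rho_0=\rho_1$, so the $FPR$ limit is immediate rather than ``by the same arguments,'' and you bridge the $\leq$-versus-$<$ mismatch between $\kappa_\tau$ and the cumulative rates via an auxiliary threshold $\tau'<\tau$, whereas the paper silently identifies $\PP\left[\ell(Z,Z_0)\geq\tau\right]$ with $1-\kappa_\tau(Z_0)$.
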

\begin{proof}
Denote $\mcD_N=\{Z_1,\dots,Z_N\}$ as a training dataset. We immediately find that 
\begin{align*}
TPR_{cum}(\tau) = & \mcP_{\ell_{\min}\sim\rho_0}\left[\ell_{\min}(\mcD_N,\mcR_0(\theta))<\tau\right] = 1 - \mcP_{\ell_{\min}\sim\rho_0}\left[\min_{Z\in\mcD_N}\ell(Z,\mcR_0(\theta))\geq \tau\right]
\\
= & 1 - \mcP_{\mcD_N\sim\pi_N}\left[\ell(Z_i,Z_0)\geq \tau,\ i=1,\dots,N\right] = 1-\prod_{i=1}^N\mcP_{Z_i\sim\pi}\left[\ell(Z_i,Z_0)\geq \tau\right]
\\
= & 1-(1-\kappa_\tau(Z_0))^N.
\end{align*}
Thus, by the assumption that $0< \kappa_\tau(Z_0) <1$, when the amount of training data is large we have $TPR_{cum}(\tau)\xrightarrow{N\to\infty} 1$. In other words, the baseline attack can achieve great reconstruction $TPR$. However, it is not retrieving any information about the actual training set since it is merely returning an arbitrary member of $\mcZ$ -- using the same arguments we can show that its cumulative $FPR$ tends to $1$ as well. 
\end{proof}

The $TPR$ and $FPR$ can be also interpreted in terms of a security game which proceeds between a challenger and an adversary.
\begin{enumerate}[leftmargin=*]
\setlength\itemsep{0em}
\item[1)] The challenger samples a training dataset $D_n=(Z_1,\dots,Z_N)\sim\pi_N$ and applies the training mechanism $D_n\xrightarrow{M} \theta$.
\item[2)] The challenger randomly chooses $b\in\{0,1\}$, and if $b=0$, samples a new dataset $D_n'\sim\pi_N$ such that $D_n\cap D_n'=\emptyset$ and applies the training mechanism $D_n'\xrightarrow{M} \theta'$. Otherwise, the challenger selects $\theta'=\theta$.
\item[3)] The challenger sends $D_n$ and $\theta'$ to the adversary.
\item[4)] The adversary gets query access to the distribution $\pi$ and the mechanism $M$ and outputs a bit $\hat b$ given the knowledge of $D_n$ and $\theta'$.
\item[5)] Output $1$ if $\hat b = b$, and $0$ otherwise.
\end{enumerate}
Note that if the training mechanism $M$ is deterministic then the above game is trivial -- the adversary can simply compute $M(D_n)$ and compare it with $\theta'$. Otherwise, the adversary can apply the standard MIA to some $z_0\in D_n$ to decide whether $z_0$ is a member of the training set that corresponds to $\theta'$. In this sense, a reconstruction attack can also be used as a MIA. Namely, given a reconstruction attack $R:\ \Theta\to\mcZ$ the adversary outputs $\hat b=1$ if the Neyman-Person criterion accepts the hypothesis $H_0$ and $\hat b=0$ otherwise. We define the false-positive reconstruction rate as the probability of the false-positive outcome in the above game, i.e. $\hat b=1$ and $b=0$. Similarly, the true-positive reconstruction rate is the probability of the true-positive outcome in the above game, i.e. $\hat b=1$ and $b=1$.

Algorithm~\ref{alg:tpr_fpr_est} describes the $TPR$ and $FPR$ estimation procedures that we have used to calculate the Neyman-Pearson ROC curves. For explanation of the notation, see Section \ref{sec:rero}, Section \ref{sec:attack} and Algorithm~\ref{alg:shadow_training} in particular.
\begin{algorithm}
\DontPrintSemicolon
\KwIn{Reconstructor NN $\mathcal{R}$, shadow validation dataset $\mathcal{D}_{shadow}^{val}$, (normalized) error function $\ell$, data prior distribution $\pi_{\mathcal{D}}$, threshold $\tau\in\RR$, target reconstruction class $c \in \{0,\dots,C-1\}$.}
\KwOut{Reconstruction $TPR_{NP}^{(c)}(\tau)$, $FPR_{NP}^{(c)}(\tau)$.}
\BlankLine
Initialize $N_{TP} \gets 0$ and $N_{FP} \gets 0$.\;
Initialize empty lists $L_{0} \gets [\ ]$ and $L_{1} \gets [\ ]$.\;

\ForEach{$(\mathcal{D}_k,\theta_k) \in \mathcal{D}_{shadow}^{val}$}{    
    Sample $\tilde{\mathcal{D}}_k$ from $\pi_{\mathcal{D}}$.\;
    Select $\mathcal{D}_k^c \subset \mathcal{D}_k$, $\tilde{\mathcal{D}}_k^c \subset \tilde{\mathcal{D}}_k$ as the subsets of images of class $c$.\;
    Set $\theta_k^c \gets$ the vector $\theta_k$ appended with the one-hot encoded class vector of class $c$.\;
    $Z_{rec}\gets \mathcal{R}(\theta_k^c)$\;
    $\ell_{\min} \gets \min_{Z\in\mathcal{D}_k^c}\ell\!\left(Z,Z_{rec}\right)$\;
    $L_{0}.{\mathrm{append}}\left(\log\frac{\ell_{\min}}{1-\ell_{\min}}\right)$\;
    $\tilde\ell_{\min} \gets \min_{Z\in\tilde{\mathcal{D}}_k^c}\ell\!\left(Z,Z_{rec}\right)$\;
    $L_{1}.{\mathrm{append}}\left(\log\frac{\tilde\ell_{\min}}{1-\tilde\ell_{\min}}\right)$\;
}
Estimate the Gaussians $\rho_i=\mcN(\mu_i,\sigma_i^2)$, $i\in\{0,1\}$:\;
$\mu_i\gets \frac{1}{\#L_i}\sum_{\phi\in L_i}\phi$ \;
$\sigma_i^2\gets \frac{1}{\#L_i}\sum_{\phi\in L_i}(\phi-\mu_i)^2$ \;
$TPR_{NP}^{(c)}(\tau)\gets \#\left\{\phi\in L_0:\ \log\left(\frac{\rho_0(\phi)}{\rho_1(\phi)}\right)>\tau\right\}$\;
$TPR_{NP}^{(c)}(\tau)\gets \#\left\{\phi\in L_1:\ \log\left(\frac{\rho_0(\phi)}{\rho_1(\phi)}\right)>\tau\right\}$\;
\caption{$TPR_{NP}$, $FPR_{NP}$ estimation for reconstructor NN}
\label{alg:tpr_fpr_est}
\end{algorithm}
In the multiclass case we report the class-averaged $TPR$ and $FPR$
\[
TPR_{NP}=\frac{1}{C}\sum_{c=0}^{C-1}TPR_{NP}^{(c)}, \quad FPR_{NP}=\frac{1}{C}\sum_{c=0}^{C-1}FPR_{NP}^{(c)}.
\]

Figure~\ref{fig:roc} shows the success rates of our reconstruction attack on a log-scale receiver operating characteristic curve (ROC curve). The ROC curve shows how $TPR$ and $FPR$ changes for all the threshold values $\tau$. In particular, the ROC curve shows how our attack performs at low $FPR$.
\begin{figure}
  \centering
 \includegraphics[width=.9\textwidth]{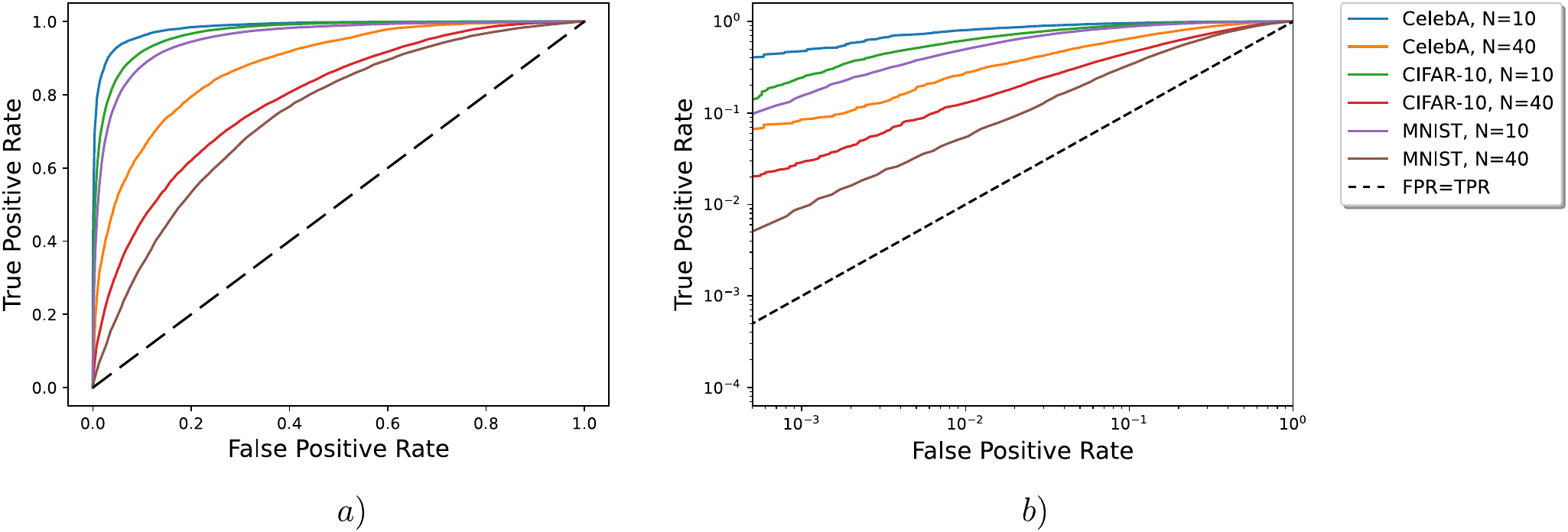}
  \caption{The (Neyman-Pearson) ROC curves for our reconstruction attack for the reconstructor NNs obtained from the experiments described in Section \ref{sec:attack}. a) Linear scale plot showing the ROC at high $FPR$ and b) log-scale plot showing the ROC at low $FPR$.}
  \label{fig:roc}
\end{figure}

\subsection{Evaluating Data Reconstruction Attacks via Hypothesis Testing}
\label{app:roc_np}
When deploying a trained reconstructor NN $\mcR:\ \Theta\to \mcZ$, one needs to understand how reliable it is. In Section \ref{sec:rero} we have argued that a reconstructor NNs can make up its outputs by generating false-positive reconstructions. In terms of the modified membership security game, given a reconstruction, $\mcR(\theta)$, and a training set $\mcD_N$ this raises the problem of deciding whether $\mcR(\theta)$ is indeed a reconstruction of some element of $\mcD_N$ or not. 

As presented in Section \ref{sec:rero}, one possible criterion is to consider the random variable
\begin{equation}\label{def:min-mse}
\ell_{\min}\left(\mcD_N,\mcR(\theta)\right):=\min_{Z\in\mcD_N}\ell\left(Z,\mcR(\theta)\right).
\end{equation}
and the associated true-positive and false-positive reconstruction rates ($TPR$ and $FPR$) via the provably optimal Neyman-Perason criterion \citep{np_lemma} (as defined in \Cref{sec:rero}) and generate the respective Neyman-Pearson ROC curves which we present in Fig.~\ref{fig:roc}. Alternatively one may consider the more straightforward cumulative criterion (see \Cref{sec:rero}) and the resulting cumulative-ROC. In this section, we compare the cumulative ROC curves with the Neyman-Pearson ROC curves. The Neyman-Pearson hypothesis test is also the base for the powerful LiRA membership inference attack \citep{mia_first_principles} which uses similar approach to the one presented here. The Neyman-Pearson ROC curves are provably optimal in the sense that they give the best possible $TPR$ at each fixed $FPR$. To calculate them, we recast the reconstructor NN evaluation problem as hypothesis testing. Namely, we treat the observable $\ell_{\min}\left(\mcD_N,\mcR(\theta)\right)$ as a (normalized) random variable supported on $[0;1]$ which can come from different distributions. More specifically, we study the following hypotheses.
\begin{itemize}[leftmargin=*]
\item The null hypothesis $H_0$ states that $\theta$ are the weights of a model trained on $\mcD_N$ i.e.,
\[
H_0:\quad \ell_{\min}=\min_{Z\in\mcD_N}\ell\left(Z,\mcR(\theta)\right)\quad\text{with}\quad \mcD_N\sim\pi_N,\ \theta\sim M(\mcD_N),
\]
where $M$ is the known randomized training mechanism.
\item The alternative hypothesis $H_1$ states that $\theta$ are the weights of a model trained on another training set $\mcD_N'$ i.e.,
\[
H_1:\quad \ell_{\min}=\min_{Z\in\mcD_N}\ell\left(Z,\mcR(\theta)\right)\quad\text{with}\quad \mcD_N\sim\pi_N,\ \mcD_N'\sim\pi_N,\ \theta\sim M(\mcD_N').
\]
\item The alternative hypothesis $H_1^*$ states that the reconstruction was randomly drawn from the prior distribution $\pi$ i.e.,
\[
H_1^*:\quad \ell_{\min}=\min_{Z\in\mcD_N}\ell\left(Z,Z'\right)\quad\text{with}\quad \mcD_N\sim\pi_N,\ Z'\sim\pi.
\]
\end{itemize}
Let us denote the respective probability distributions of $\ell_{\min}$ as $\rho_0$, $\rho_1$ and $\rho_1*$. When deciding between $H_0$ and $H_1$, the Neyman-Pearson hypothesis testing criterion accepts $H_0$ for a given value of $\ell_{\min}$ when the likelihood-ratio
\begin{equation}\label{np_test}
\log\left(\frac{\rho_0(\ell_{\min})}{\rho_1(\ell_{\min})}\right) > C
\end{equation}
for some threshold value $C\in\RR$. The $TPR$ in this test is the probability of accepting $H_0$ for $\ell_{\min}\sim\rho_0$ and the $FPR$  the probability of accepting $H_0$ for $\ell_{\min}\sim\rho_1$. Analogous criteria are applied when deciding between $H_0$ and $H_1^*$. The Neyman-Pearson ROC curves are obtained by varying the threshold $C$. Unfortunately, in our case we do not have access to the analytic expressions for $\rho_0$, $\rho_1$, $\rho_1*$, so we cannot calculate the likelihoods in Equation \ref{np_test} exactly. In practice, we will approximate the distributions $\rho_0$, $\rho_1$ and $\rho_1*$ by something more tractable. To this end, we will look at the distribution of the random variable
\[
\phi = \log\left(\frac{\ell_{\min}}{1-\ell_{\min}}\right) \in \RR.
\]
This is because our experimental evidence shows that $\phi$ is approximately normally distributed, as opposed to $\ell_{\min}$. The relevant histograms for the CelebA-reconstructor NN are shown in Fig.~\ref{fig:hypotheses_histograms}. The histograms have two important properties: i) the separation between the distribution of $\ell_{\min}$ corresponding to $H_0$ and $H_1$ or $H_1^*$ decreases with $N$ making it more difficult to distinguish between the hypotheses, ii) there is more separation between the distributions of $\ell_{\min}$ corresponding to $H_0$ and $H_1^*$ than between $H_0$ and $H_1$ showing that the reconstructor $NN$ is better at generating false-positive reconstructions than simple random guessing.
\begin{figure}
  \centering
 \includegraphics[width=.7\textwidth]{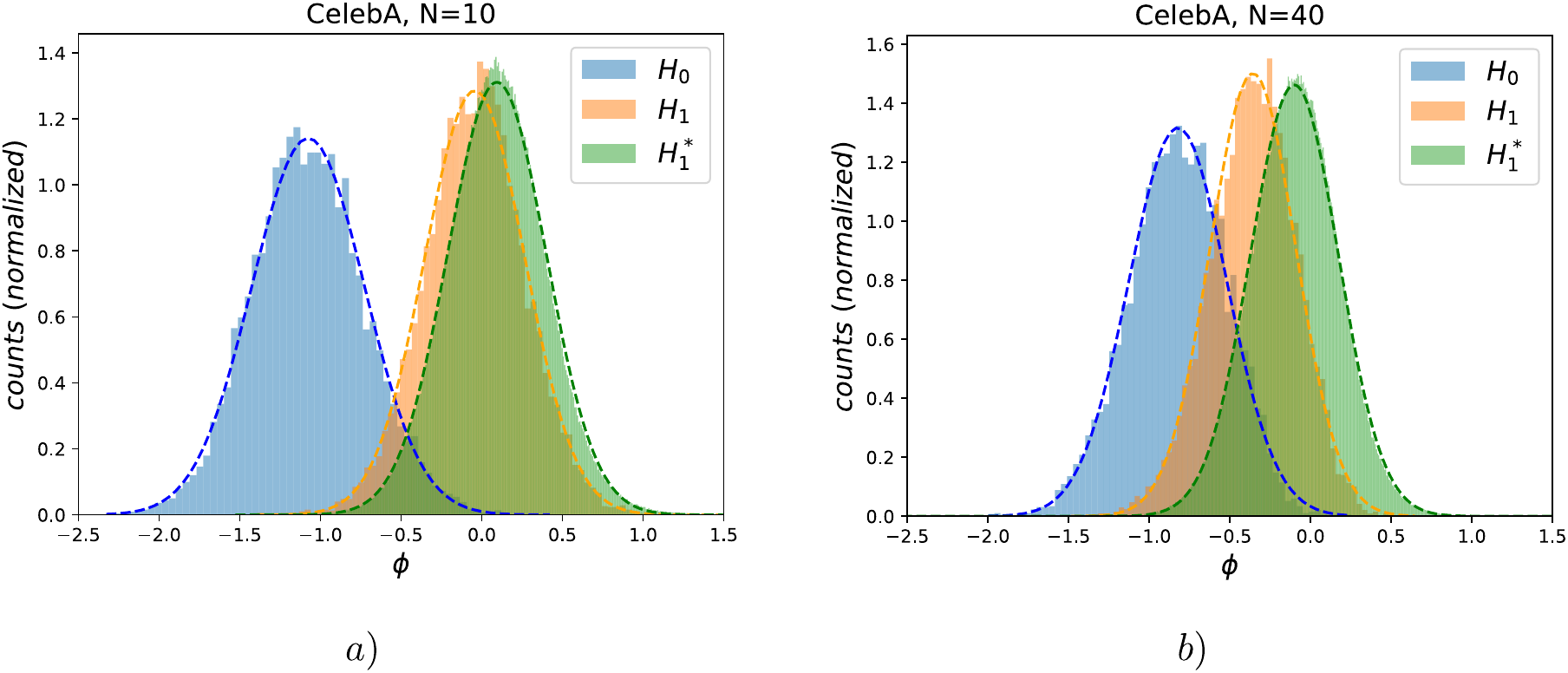}
  \caption{The histograms illustrating the approximately normal distributions of $\phi$ for the different hypotheses. The sample sizes were $10^4$, $10^4$ and $10^6$ for $H_0$, $H_1$ and $H_1^*$ respectively. The dashed curves show the Gaussians fitted according the maximum likelihood.}
  \label{fig:hypotheses_histograms}
\end{figure}
\begin{table}
  \caption{The parameters of the Gaussians $\rho\propto e^{-(\phi-\mu)^2/2\sigma^2}$ fitted in Fig.~\ref{fig:hypotheses_histograms}. Note that the fitted values of $\sigma$ for the different hypotheses are close to each other. By Theorem \ref{thm:roc} this explains the close similarity between the Neyman-Pearson ROCs and the cumulative ROCs presented in Fig.~\ref{fig:roc_comparison} and Fig.~\ref{fig:roc_analytic}.}
  \label{tab:transfer_accuracy}
  \centering
  \begin{tabular}{ccc|cc}
      \toprule
    & \multicolumn{2}{c}{$N=10$}  &  \multicolumn{2}{c}{$N=40$} \\
    \midrule
   Hypothesis & $\mu$ & $\sigma$ & $\mu$ & $\sigma$ \\
   \midrule
    $H_0$   & $-1.076$ & $0.349$ & $-0.823$ & $0.303$\\
   $H_1$   & $-0.043$ & $0.310$ &  $-0.357$ & $0.266$ \\
   $H_1^*$   & $0.093$ & $0.304$ &  $-0.101$ & $0.273$ \\
    \bottomrule
  \end{tabular}
\end{table}

In Fig~\ref{fig:roc_comparison} we compare the cumulative- and the Neyman-Pearson ROC curves for the reconstructor NNs trained in the CelebA-experiment. The plots show that both ROC curves are extremely close to each other. Below, we prove that under certain conditions they are exactly the same and derive an analytic formula for these ROC curves. The analytic formula is compared with our experimental ROC curves in Fig.~\ref{fig:roc_analytic}.
\begin{figure}
  \centering
 \includegraphics[width=.6\textwidth]{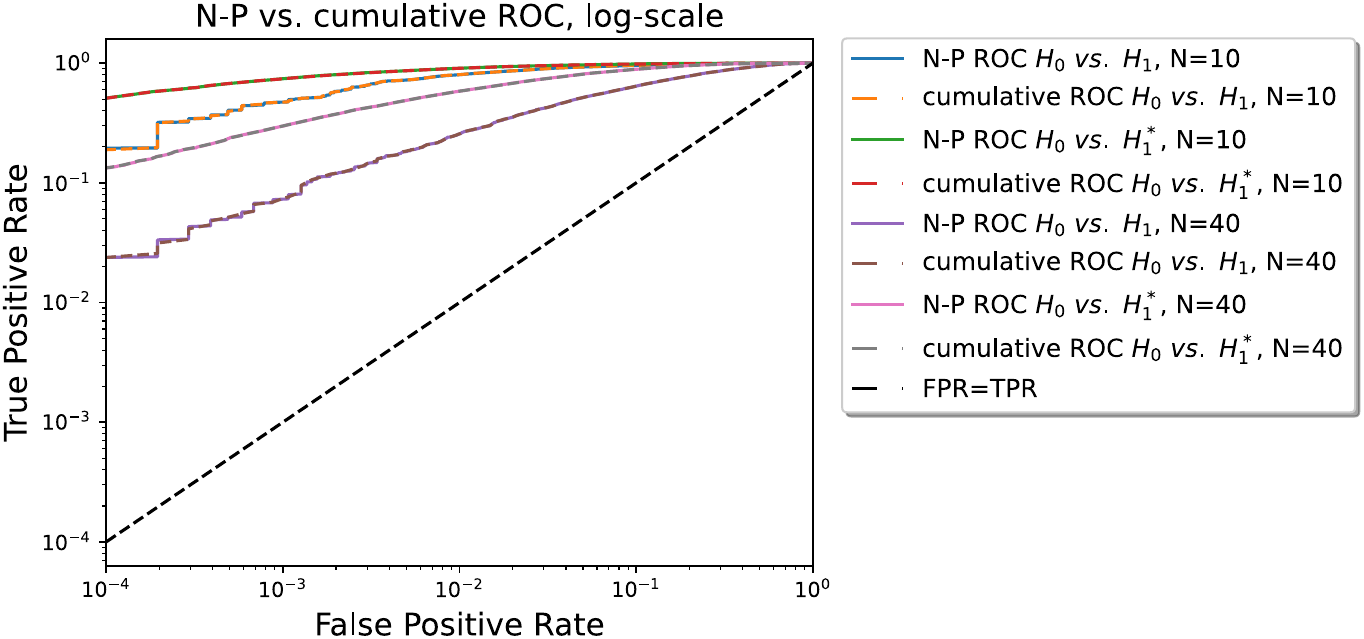}
  \caption{The comparison between the cumulative- and Neyman-Pearson (N-P) ROC curves for different pairs of hypotheses in the CelebA-experiment. There cumulative- and N-P ROC overlap almost exactly. See Theorem \ref{thm:roc} for an explanation of this effect.}
    \label{fig:roc_comparison}
\end{figure}
\begin{figure}
  \centering
 \includegraphics[width=.95\textwidth]{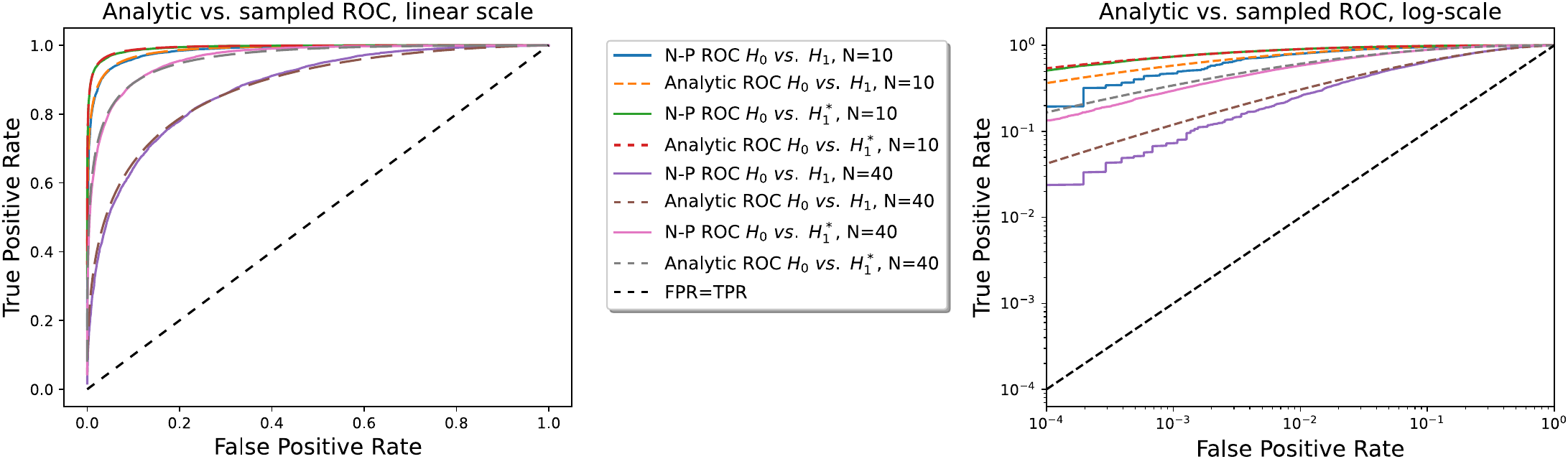}
  \caption{The comparison between the analytic ROC described by \Eqref{eq:gaussian_roc} and Neyman-Pearson (N-P) ROC curves for different pairs of hypotheses in the CelebA-experiment. There is a good agreement between the analytic expressions and the experiment, but the curves often diverge a bit at low $FPR$. We suppose this may be due to insufficient sampling at this extreme of the ROCs.}
    \label{fig:roc_analytic}
\end{figure}

\begin{theorem}
\label{thm:roc}
Let $\rho_0$ and $\rho_1$ be the distributions of the observable $\omega$ supported on a connected domain $\Omega\subset\RR$ under hypotheses $H_0$ and $H_1$ respectively. Define the cumulative- and the Neyman-Pearson ROC via
\begin{gather*}
TPR_{cum}(\tau):=\mcP_{\omega\sim\rho_0}\left[\omega<\tau\right],\quad FPR_{cum}(\tau):=\mcP_{\omega\sim\rho_1}\left[\omega<\tau\right], 
\\
TPR_{NP}(C):=\mcP_{\omega\sim\rho_0}\left[\rho_0(\omega)>C\rho_1(\omega)\right],\quad FPR_{NP}(C):=\mcP_{\omega\sim\rho_1}\left[\rho_0(\omega)>C\rho_1(\omega)\right].
\end{gather*}

Assume that there exists a strictly increasing differentiable transformation $\Phi: \Omega\to\RR$ such that
\[
\Phi(\omega)\sim\mcN\left(\mu_0,\sigma_0^2\right)\quad\textrm{if}\quad \omega\sim \rho_0,\quad \Phi(\omega)\sim\mcN\left(\mu_1,\sigma_1^2\right)\quad\textrm{if}\quad \omega\sim \rho_1
\]
with $\mu_0 < \mu_1$. 

The following facts hold.
\begin{enumerate}
\item If $\sigma_0=\sigma_1$, the cumulative ROC curve and the Neyman-Pearson ROC curve are identical and given by the following formula
\begin{equation}\label{eq:gaussian_roc}
TPR = \frac{1}{2}-\frac{1}{2}\erf\left(\frac{\mu_0-\mu_1}{\sqrt{2}\sigma_0}+\frac{\sigma_1}{\sigma_0}\erf^{-1}\left(1-2FPR\right)\right),
\end{equation}
where $\erf$ is the error function. 
\item The Neyman-Pearson ROC curve is a continuous function of $\sigma_0$ and $\sigma_1$, thus when $\sigma_0$ is close to $\sigma_1$, the Neyman-Pearson ROC curve remains close to the cumulative ROC curve. 
\item For arbitrary $\sigma_1$, $\sigma_0$ the Neyman-Pearson ROC is asymptotically equivalent to the cumulative ROC described by \Eqref{eq:gaussian_roc} in the low-$FPR$ regime.
\end{enumerate}
\end{theorem}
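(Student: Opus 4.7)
The plan is to transport everything to the standardized coordinate $x=\Phi(\omega)$. Since $\Phi$ is strictly increasing and pushes $\rho_0,\rho_1$ to Gaussians on $\RR$, $\Phi$ is a bijection $\Omega\to\RR$, and the change-of-variables Jacobians cancel in the likelihood ratio to give $\rho_0(\omega)/\rho_1(\omega) = \phi_0(x)/\phi_1(x)$ with $\phi_i$ the $\mcN(\mu_i,\sigma_i^2)$ density. The cumulative event $\{\omega<\tau\}$ maps to the half-line $\{x<\Phi(\tau)\}$, so both the Neyman--Pearson and cumulative ROCs reduce to Gaussian integrals that the subsequent parts analyse directly.

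For Part 1 (equal variances), $\log(\phi_0(x)/\phi_1(x))$ is linear in $x$ with slope $(\mu_0-\mu_1)/\sigma^2<0$, so the NP acceptance set $\{\phi_0/\phi_1>C\}$ collapses to a half-line $\{x<x_C\}$; pulled back through $\Phi$ this is the same event as $\{\omega<\Phi^{-1}(x_C)\}$, so the NP and cumulative tests literally coincide. The formula \eqref{eq:gaussian_roc} then follows by inverting $FPR=\frac{1}{2}(1+\erf((x_C-\mu_1)/(\sqrt{2}\sigma_1)))$ to get $x_C$ in terms of $FPR$, substituting into $TPR=\frac{1}{2}(1+\erf((x_C-\mu_0)/(\sqrt{2}\sigma_0)))$, and tidying up using $\erf(-u)=-\erf(u)$ and $\erf^{-1}(-u)=-\erf^{-1}(u)$. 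For Part 2, at fixed FPR the NP threshold $C(\sigma_0,\sigma_1,FPR)$ is defined implicitly by a Gaussian-mass equation that is strictly monotone in $C$, so by the implicit function theorem it depends continuously on $(\sigma_0,\sigma_1)$; the corresponding TPR is the integral of $\phi_0$ over the super-level set of a quadratic in $x$ whose coefficients depend continuously on $(\sigma_0,\sigma_1)$, and dominated convergence then yields continuity of the ROC as a function of the variances.

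Part 3 is the main obstacle. In the $x$ coordinate, $\log(\phi_0/\phi_1)$ is a quadratic $Ax^2+Bx+D$ with $A=\frac{1}{2}(1/\sigma_1^2-1/\sigma_0^2)$, and the acceptance region's topology depends on $\mathrm{sign}(A)$. In the regime $\sigma_0>\sigma_1$ (which matches the empirically observed configuration in the paper's experiments and in Fig.~\ref{fig:roc_analytic}) one has $A>0$, so the acceptance set is the exterior of a bounded interval $(-\infty,x_-(C))\cup(x_+(C),+\infty)$ with $x_\pm(C)\to\pm\infty$ as $C\to\infty$. I would apply Mills' ratio to expand the four tail integrals and check that the quadratic's centre $-B/(2A)=(\mu_1\sigma_0^2-\mu_0\sigma_1^2)/(\sigma_0^2-\sigma_1^2)$ lies strictly to the right of both $\mu_0$ and $\mu_1$, so that the left tail $(-\infty,x_-(C))$ dominates both tail integrals at leading exponential order. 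This identifies $x_-(C)$ asymptotically with the cumulative threshold $\Phi(\tau)$ matching the same FPR, and the identification transports to TPR, yielding asymptotic equivalence of $TPR_{NP}$ and $TPR_{cum}$ as $FPR\to 0$. The delicate step is uniform control of the sub-exponential Mills' prefactors along $C\to\infty$, which I would handle by tracking the $\sqrt{\log C}$-asymptotics of $x_\pm(C)$; the complementary case $\sigma_0<\sigma_1$ is subtler since the NP acceptance region becomes a bounded interval rather than a half-line, and the equivalence statement likely needs to be weakened (to, e.g., equivalence of $-\log TPR$ versus $-\log FPR$ at leading order) or restricted to the experimentally relevant regime $\sigma_0\geq\sigma_1$.
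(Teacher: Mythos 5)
Your proposal follows essentially the same route as the paper's proof: transport everything to the coordinate $\phi=\Phi(\omega)$, where the Jacobian factors cancel in the likelihood ratio; observe that for $\sigma_0=\sigma_1$ the log-likelihood ratio is linear with negative slope, so the NP and cumulative tests are literally the same half-line test (part 1 then follows by inverting the two $\erf$ relations); and for $\sigma_0>\sigma_1$ analyse the convex quadratic whose super-level sets are exteriors of intervals $(r_0-\delta,r_0+\delta)$, with the left tail dominating precisely because $r_0=(\mu_1\sigma_0^2-\mu_0\sigma_1^2)/(\sigma_0^2-\sigma_1^2)>\mu_1>\mu_0$. The paper's part 3 is exactly the tail computation you sketch (it uses the asymptotic series for $\erf$ and for $\erf^{-1}$ near $1$ rather than Mills' ratio, but these are the same estimates). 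Your part 2 differs in mechanism --- implicit function theorem plus dominated convergence rather than the paper's explicit evaluation of the $\sigma_1\to\sigma_0^-$ limits of $r_0\pm\delta$ --- and you would still need to handle the degeneration of the quadratic into a linear function (one root escaping to $-\infty$), which is what the paper's explicit limit computation does; your version is workable but not obviously shorter.

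Your hesitation about the case $\sigma_0<\sigma_1$ is well placed, and is in fact sharper than the paper, which asserts that this case ``can be treated fully analogously'' and proves nothing there. It cannot be: for $\sigma_0<\sigma_1$ the log-likelihood ratio is a concave parabola with maximiser $r_0<\mu_0<\mu_1$, the NP acceptance set is a bounded interval that shrinks to the point $r_0$ as $C$ increases to the maximum of the likelihood ratio, and consequently $TPR_{NP}/FPR_{NP}$ tends to the finite constant $\tilde\rho_0(r_0)/\tilde\rho_1(r_0)$, whereas the cumulative ROC of \Eqref{eq:gaussian_roc} satisfies $TPR\sim FPR^{\sigma_1^2/\sigma_0^2}=o(FPR)$ near the origin. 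The two curves are therefore not asymptotically equivalent in that regime, and part 3 as stated for ``arbitrary $\sigma_0$, $\sigma_1$'' really holds only for $\sigma_0\geq\sigma_1$ (the configuration observed in the paper's experiments). Your suggestion to restrict or weaken the statement in the complementary regime is the correct fix, not a defect of your argument.
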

\begin{proof}
Since the transformation $\Phi$ is strictly increasing, we get that $\omega<\tau$ if and only if $\Phi(\omega)<\Phi(\tau)$. Thus,
\begin{equation}\label{eq:tpr_fpr_cum}
TPR_{cum}(\tau):=\mcP_{\phi\sim\mcN\left(\mu_0,\sigma_0^2\right)}\left[\phi<\Phi(\tau)\right],\quad FPR_{cum}(\tau):=\mcP_{\phi\sim\mcN\left(\mu_1,\sigma_1^2\right)}\left[\phi<\Phi(\tau)\right].
\end{equation}
The above probabilities can be calculated analytically.
\begin{align}
TPR_{cum}(\tau):=\frac{1}{2}\left(1-\erf\left(\frac{\mu_0-\Phi(\tau)}{\sqrt{2}\sigma_0}\right)\right), \label{eq:tpr_cum_analytic}
\\
 FPR_{cum}(\tau):=\frac{1}{2}\left(1-\erf\left(\frac{\mu_1-\Phi(\tau)}{\sqrt{2}\sigma_1}\right)\right). \label{eq:fpr_cum_analytic}
\end{align}
It is a matter of a straightforward calculation to extract $\Phi(\tau)$ from \Eqref{eq:fpr_cum_analytic} and plug it into \Eqref{eq:tpr_cum_analytic} to obtain the cumulative ROC curve given by \Eqref{eq:gaussian_roc}. In the remaining part of this proof we compare the ROC curve given by \Eqref{eq:gaussian_roc} with the ROC curve is obtained from the Neyman-Pearson criterion.

The transformation $\Phi$ is differentiable and strictly increasing, so $\left(\Phi^{-1}\right)'(\phi)> 0$ for all $\phi\in\RR$. Thus, $\rho_0(\omega)>C\rho_1(\omega)$ for some $\omega=\Phi^{-1}(\phi)$ if and only if 
\[
\rho_0\left(\Phi^{-1}(\phi)\right)\left(\Phi^{-1}\right)'(\phi)>C\rho_1\left(\Phi^{-1}(\phi)\right)\left(\Phi^{-1}\right)'(\phi).
\]
Recall that the transformation $\Phi$ is chosen so that 
\[
\rho_i\left(\Phi^{-1}(\phi)\right)\left(\Phi^{-1}\right)'(\phi)=\tilde\rho_i(\phi):= \frac{1}{\sqrt{2\pi}\sigma_i}\exp\left(-\frac{(\phi-\mu_i)^2}{2\sigma_i^2}\right), \quad i=0,1.
\]
Thus, using the monotonicity of the logarithm we can write equivalently that
\begin{align}\label{eq:tpr_fpr_NP}
\begin{split}
TPR_{NP}(\tilde C):=\mcP_{\phi\sim\tilde\rho_0}\left[\log\frac{\tilde\rho_0(\phi)}{\tilde\rho_1(\phi)}>\tilde C\right], \quad  FPR_{NP}(\tilde C):=\mcP_{\phi\sim\tilde\rho_1}\left[\log\frac{\tilde\rho_0(\phi)}{\tilde\rho_1(\phi)}>\tilde C\right].
\end{split}
\end{align}
Plugging in the Gaussian form of $\tilde\rho_0$ and $\tilde\rho_1$ we get that the log-likelihood-ratio is given by the following equation.
\begin{equation}\label{eq:log_lr}
\log\frac{\tilde\rho_0(\phi)}{\tilde\rho_1(\phi)}=\log\frac{\sigma_1}{\sigma_0}-\frac{(\phi-\mu_0)^2}{2\sigma_0^2}+\frac{(\phi-\mu_1)^2}{2\sigma_1^2}.
\end{equation}
In what follows, we will assume that $\sigma_0\geq\sigma_1$ which is the case in the CelebA experiments. The case $\sigma_0<\sigma_1$ can be treated fully analogously. Let us start with the case when $\sigma_0=\sigma_1\equiv\sigma$. Then, 
\[
\log\frac{\tilde\rho_0(\phi)}{\tilde\rho_1(\phi)} = -\frac{\mu_1-\mu_0}{\sigma^2}\phi+\frac{(\mu_1+\mu_0)(\mu_1-\mu_0)}{2\sigma^2},
\]
i.e.,  the log-likelihood-ratio is a negatively-sloped linear function of $\phi$. Thus, $\log\frac{\tilde\rho_0(\phi)}{\tilde\rho_1(\phi)}>\tilde C$ whenever $\phi < \phi_0(\tilde C)$ where
\begin{equation}\label{eq:phi0}
\phi_0(\tilde C) = \frac{(\mu_1+\mu_0)(\mu_1-\mu_0)-2\sigma^2\tilde C}{2(\mu_1-\mu_0)}.
\end{equation}
According to \Eqref{eq:tpr_fpr_NP} this yields
\begin{align}
\begin{split}\label{eq:roc_np_degenerate}
\overline{TPR}_{NP}\left(\tilde C\right) & = \frac{1}{\sqrt{2\pi}\sigma}\int_{-\infty}^{\phi_0}\exp\left(-\frac{(\phi-\mu_0)^2}{2\sigma^2}\right)d\phi = \frac{1}{2}\left(1-\erf\left(\frac{\mu_0-\phi_0(\tilde C)}{\sqrt{2}\sigma}\right)\right),
\\
\overline{FPR}_{NP}\left(\tilde C\right) & = \frac{1}{\sqrt{2\pi}\sigma}\int_{-\infty}^{\phi_0 }\exp\left(-\frac{(\phi-\mu_1)^2}{2\sigma^2}\right)d\phi = \frac{1}{2}\left(1-\erf\left(\frac{\mu_1-\phi_0(\tilde C) }{\sqrt{2}\sigma}\right)\right).
\end{split}
\end{align}
The Equations \Eqref{eq:roc_np_degenerate} become identical with the \Eqref{eq:tpr_cum_analytic} and \Eqref{eq:fpr_cum_analytic} describing the cumulative $TPR$ and $FPR$ under the identification $\phi_0(\tilde C)=\Phi(\tau)$. Such an identification is indeed possible -- using \Eqref{eq:phi0} for every $\nu\in\Omega$ one can find a unique $\tilde C(\nu)$ for which $\phi_0(\tilde C)=\Phi(\tau)$. Thus, we have shown that the Neyman-Pearson ROC is equal to the cumulative ROC if $\sigma_0=\sigma_1$.

Let us next prove the statements 2. and 3. of this Theorem concerning the case when $\sigma_0>\sigma_1$. Then, the log-likelihood-ratio \Eqref{eq:log_lr} is described by a convex parabola. The Neyman-Pearson criterion leads to two roots whenever
\[
\tilde C > \tilde C_{\min}=\log\frac{\sigma_1}{\sigma_0}-\frac{1}{2}\frac{(\mu_0-\mu_1)^2}{\sigma_0^2-\sigma_1^2}.
\]
The two roots are of the form $r_0\pm \delta$, where
\begin{equation}\label{eq:roots}
r_0 = \frac{\mu_1\sigma_0^2-\mu_0\sigma_1^2}{\sigma_0^2-\sigma_1^2},\quad\delta\left(\tilde C\right) = \frac{\sigma_0\sigma_1\sqrt{(\mu_0-\mu_1)^2+2\left(\tilde C+\log\frac{\sigma_0}{\sigma_1}\right)(\sigma_0^2-\sigma_1^2)}}{\sigma_0^2-\sigma_1^2}>0.
\end{equation}
Hence, we can compute
\begin{align}
\begin{split}\label{eq:tpr_np}
TPR_{NP}\left(\tilde C\right) & = 1-\mcP_{\phi\sim\tilde\rho_0}\left[\log\frac{\tilde\rho_0(\phi)}{\tilde\rho_1(\phi)}<\tilde C\right] = 1 - \frac{1}{\sqrt{2\pi}\sigma_0}\int_{r_0-\delta}^{r_0+\delta}\exp\left(-\frac{(\phi-\mu_0)^2}{2\sigma_0^2}\right)d\phi 
\\
& = 1-\frac{1}{2}\left(\erf\left(\frac{\mu_0-r_0+\delta\left(\tilde C\right) }{\sqrt{2}\sigma_0}\right) - \erf\left(\frac{\mu_0-r_0-\delta\left(\tilde C\right) }{\sqrt{2}\sigma_0}\right)  \right)
\end{split}
\end{align}
Similarly, we get
\begin{align}
\begin{split}\label{eq:fpr_np}
FPR_{NP}\left(\tilde C\right) = 1-\frac{1}{2}\left(\erf\left(\frac{\mu_1-r_0+\delta\left(\tilde C\right) }{\sqrt{2}\sigma_1}\right) - \erf\left(\frac{\mu_1-r_0-\delta\left(\tilde C\right) }{\sqrt{2}\sigma_1}\right)\right) .
\end{split}
\end{align}
Clearly, the above $TPR_{NP}$ and $FPR_{NP}$ are continuous functions of $\sigma_0,\sigma_1$ when $\sigma_0>\sigma_1$. Let us next show that they have a finite limit when $\sigma_1\to\sigma_0$ from below and that this limit again describes the cumulative ROC curve. This result means that the  Neyman-Pearson ROC curve is a continuous function of $\sigma_0,\sigma_1$ also on the line $\sigma_1=\sigma_0$. Thus, when $\sigma_1=\sigma_0-\epsilon$ with $\epsilon$ small the resulting Neyman-Pearson ROC curve will remain close to the cumulative ROC curve by continuity arguments. Using the expressions \Eqref{eq:roots} we find the following limits.
\begin{align*}
\mu_0-r_0+\delta\left(\tilde C\right) \xrightarrow{\sigma_1\to\sigma_0^-} \frac{1}{2}\left(\mu_1-\mu_0+\frac{2 \sigma^2\tilde C}{\mu_1-\mu_0}\right), \quad \mu_0-r_0-\delta\left(\tilde C\right) \xrightarrow{\sigma_1\to\sigma_0^-} -\infty,
\\
\mu_1-r_0+\delta\left(\tilde C\right) \xrightarrow{\sigma_1\to\sigma_0^-} \frac{1}{2}\left(\mu_0-\mu_1+\frac{2\sigma^2\tilde C }{\mu_1-\mu_0}\right), \quad \mu_1-r_0-\delta\left(\tilde C\right) \xrightarrow{\sigma_1\to\sigma_0^-} -\infty.
\end{align*}
After using the fact that $\erf(u)\xrightarrow{u\to-\infty}-1$ and some algebra we get the following limits for $TPR_{NP}$ and $FPR_{NP}$.
\begin{align*}
TPR_{NP}\left(\tilde C\right) \xrightarrow{\sigma_1\to\sigma_0^-} \overline {TPR}_{NP}\left(\tilde C\right), \quad FPR_{NP}\left(\tilde C\right) \xrightarrow{\sigma_1\to\sigma_0^-} \overline {FPR}_{NP}\left(\tilde C\right),
\end{align*}
where $\overline {TPR}_{NP}$ and $\overline {FPR}_{NP}$ were given in \Eqref{eq:roc_np_degenerate} and describe the cumulative ROC curve.

It remains to prove statement 3. concerning the asymptotic equivalence of the Neyman-Pearson ROC and the cumulative ROC curves at low $FPR$. Note that the low-$FPR$ limit is equivalent to taking $\tilde C\to \infty$ in Equations \Eqref{eq:tpr_np} and \Eqref{eq:fpr_np}. Using the analytic formula for the cumulative ROC curve \Eqref{eq:gaussian_roc} we get that the asymptotic equivalence of the Neyman-Pearson ROC and the cumulative ROC curves means that
\begin{equation}\label{eq:roc_asymptotic}
\frac{1}{2}\frac{1-\erf\left(\frac{\mu_0-\mu_1}{\sqrt{2}\sigma_0}+\frac{\sigma_1}{\sigma_0}\erf^{-1}\left(1-2FPR_{NP}\left(C\right)\right)\right)}{TPR_{NP}\left(C\right)}\xrightarrow{C\to\infty} 1.
\end{equation}
To prove \Eqref{eq:roc_asymptotic} we will make use of the asymptotic expansions of the error function and the inverse error function. Using the asymptotic expansion \citep{olver}
\[
\erf(u)\sim\mathrm{sgn}(u)\left(1-e^{-u^2}\left(\frac{1}{u\sqrt{\pi}}+\mathcal{O}\left(u^{-2}\right)\right)\right),\quad |u|\to\infty
\]
we get the following asymptotic expansions for $TPR_{NP}\left(C\right)$ and $FPR_{NP}\left(C\right)$ when $C\to\infty$
\begin{align}
\begin{split}\label{eq:tpr_fpr_asymp_expansion}
TPR_{NP}\left(C\right)\sim e^{-\left(\frac{\mu_0-r_0+\delta\left(C\right) }{\sqrt{2}\sigma_0}\right)^2}\left(\frac{\sigma_0}{\sqrt{2\pi}}\frac{1}{\delta(C)}+\mathcal{O}\left(\delta(C)^{-2}\right)\right), \\
FPR_{NP}\left(C\right)\sim e^{-\left(\frac{\mu_1-r_0+\delta\left(C\right) }{\sqrt{2}\sigma_1}\right)^2}\left(\frac{\sigma_1}{\sqrt{2\pi}}\frac{1}{\delta(C)}+\mathcal{O}\left(\delta(C)^{-2}\right)\right).
\end{split}
\end{align}
Our next goal is to find the asymptotic expansion of the numerator of the LSH of \Eqref{eq:roc_asymptotic}. To this end, we use the following asymptotic expansion of the inverse error function
\begin{equation}\label{eq:inverf_exp1}
\erf^{-1}\left(1-\frac{e^{-u^2}}{\sqrt{\pi}u}\right)\sim u + \mathcal{O}\left(u^{-2}\right), \quad u\to\infty.
\end{equation}
The formula \Eqref{eq:inverf_exp1} follows from the standard expansion of $\erf^{-1}(x)$ around $x\to 1$ (see Equation (2) in \cite{blair})
\begin{equation}\label{eq:inverf_exp_blair}
\erf^{-1}\left(x\right)\sim \sqrt{\eta}-\frac{1}{4}\log(\eta)\eta^{-1/2}+\mathcal{O}\left(\eta^{-3/2}\right),\quad \eta = -\log(\sqrt{\pi}(1-x)),\quad x\to 1.
\end{equation}
The formula \Eqref{eq:inverf_exp1} is obtained from \Eqref{eq:inverf_exp_blair} by putting $\eta=u^2+\log u$ and making use of the facts that
\[
\sqrt{u^2+\log u}\sim u+\frac{1}{2}\frac{\log u}{u}+\mathcal{O}\left(u^{-2}\right), \quad \frac{1}{4}\frac{\log(u^2+\log u)}{\sqrt{u^2+\log u}}\sim \frac{1}{2}\frac{\log u}{u}+\mathcal{O}\left(u^{-2}\right),\quad u\to\infty.
\]
Thus, by putting $u=\frac{-\mu_1+r_0+\delta\left(C\right) }{\sqrt{2}\sigma_1}$ in \Eqref{eq:inverf_exp1} we have that at $C\to\infty$
\[
\erf^{-1}\left(1-2FPR_{NP}\left(C\right)\right) \sim \frac{\mu_1-r_0+\delta\left(C\right) }{\sqrt{2}\sigma_1} + \mathcal{O}\left(\delta(C)^{-2}\right)
\]
After plugging the above expansion into the numerator of the LHS of \Eqref{eq:roc_asymptotic} followed by some straightforward algebra we obtain that at $C\to\infty$
\begin{align*}
\frac{1}{2}-\frac{1}{2}\erf\bigg{(}\frac{\mu_0-\mu_1}{\sqrt{2}\sigma_0}+ \frac{\sigma_1}{\sigma_0}\erf^{-1}(1-& 2FPR_{NP}\left(C\right)) \bigg{)} \sim
\\
& \sim \frac{1}{2}-\frac{1}{2}\erf\left(\frac{\mu_0-r_0+\delta\left(C\right) }{\sqrt{2}\sigma_0}+ \mathcal{O}\left(\delta(C)^{-2}\right)\right)
\\
& \sim e^{-\left(\frac{\mu_0-r_0+\delta\left(C\right) }{\sqrt{2}\sigma_0}\right)^2}\left(\frac{\sigma_0}{\sqrt{2\pi}}\frac{1}{\delta(C)}+\mathcal{O}\left(\delta(C)^{-2}\right)\right)
\end{align*}
which in the leading order is identical to the asymptotic expansion of $TPR_{NP}(C)$ from \Eqref{eq:tpr_fpr_asymp_expansion}. This proves \Eqref{eq:roc_asymptotic}.
\end{proof}

\section{Reconstructor NN architecture}
\label{app:architecture}
The reconstructor is a residual network. Each residual block uses ReLU activations and consists of: 1) 2D batch norm layer followed by ReLU whose output is upsampled via the nearest-neighbours algorithm, 2) a pre-activation $(3\times3)$-convolutional layer with $1$-padding and stride $1$ followed by another 2D batch norm layer and ReLU, 3) another pre-activation  $(3\times3)$-convolutional layer with $1$-padding and stride $1$. The bypass connection contains upsampling via the nearest-neighbours algorithm and, if the number of output channels of the residual block is different than the number of its input channels, then the bypass-upsampling is followed by a pre-activation $(1\times1)$-convolution layer with $0$-padding and stride $1$. See Fig.~\ref{fig:res_block}. This residual block has the same architecture as the one used in \citep{resgan1,resgan2}, except for the bypass-convolution which we need to allow for the number of channels to change.
\begin{figure}
  \centering
 \includegraphics[width=0.9\textwidth]{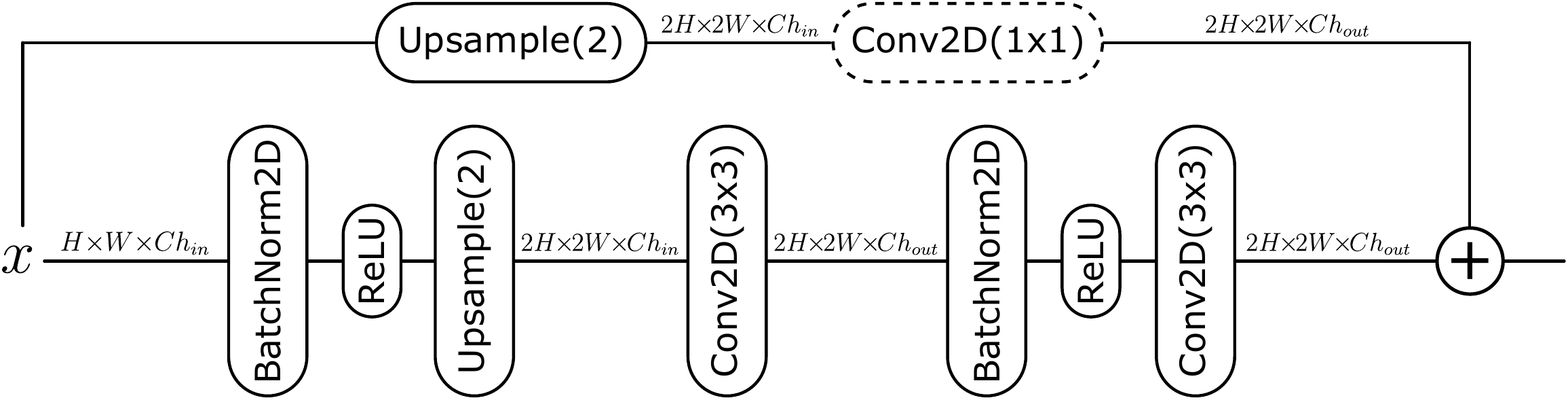}
  \caption{The residual block of the reconstructor NN. The ``$+$''-operation means element-wise addition.}
  \label{fig:res_block}
\end{figure}
Table \ref{tab:rec_summary} shows the full architecture which is also slightly modified relative to the original version from \citep{resgan1,resgan2} -- the first layer is convolutional and we change the number of channels between some of the consecutive residual blocks. For MNIST and CIFAR the output image has $32\times32$ resolution while for CelebA the output image has $64\times64$ resolution.
\begin{table}
  \caption{Summary of the reconstructor NN architecture. $D_{\Theta}$ is the number of trainable parameters in the attacked model that are accessed by the reconstructor. We have $D_{\Theta}$ equal to $2570$, $12810$ and $8201$ for the MNIST-, CIFAR- and CelebA-experiments respectively. $S$ is reconstructor's internal size parameter. In the MNIST-experiment we take $S=256$ and in the CIFAR- and CelebA-experiments we take $S=512$. $Ch_{out}=1$ for MNIST and $Ch_{out}=3$ for CIFAR, CelebA.}
  \label{tab:rec_summary}
  \centering
  \begin{tabular}{lclc}
      \toprule
     \multicolumn{4}{c}{Reconstructor $R(\theta)$} \\
    \cmidrule(r){1-4}
    & Kernel Size  & Output Shape & Notes \\
   \midrule
    $\theta$   & --  & $1\times 1 \times (D_{\Theta}+10)$ & --  \\
    ConvTranspose2D & $4\times 4$ & $4\times 4 \times 4 S$ & $str.=1$, $pad.=0$ \\
    Residual Block & $3\times 3$ & $8\times 8 \times 2 S$ & -- \\
    Residual Block & $3\times 3$ & $16\times 16 \times  S$ & -- \\
    Residual Block & $3\times 3$ & $32\times 32 \times  S$ & -- \\
    Residual Block & $3\times 3$ & $64\times 64 \times  S$ & only for CelebA \\
    BatchNorm2D & -- & $32\times 32 \times  S$/$64\times 64 \times  S$ & --\\
    ReLU & -- & $32\times 32 \times  S$/$64\times 64 \times  S$ & --\\
    Conv2D & $3\times 3$ & $32\times 32 \times  Ch_{out}$/$64\times 64 \times  Ch_{out}$ & $str.=1$, $pad.=1$ \\
    Tanh & -- & $32\times 32 \times  Ch_{out}$/$64\times 64 \times  Ch_{out}$ & -- \\
    \bottomrule
  \end{tabular}
\end{table}
For further details, please refer to our open-source implementation on GitHub:
\begin{itemize}
\item Training Data Reconstruction Attacks 
\url{https://github.com/tmaciazek/training-data-reconstruction}
\end{itemize}

\section{Further Reconstruction Examples}
\label{app:examples}
In this section, we present the following reconstruction examples. In all the examples successful reconstructions are those for which the Neyman-Pearson criterion yields $FPR=1\%$. The successful reconstructions are marked by the green tick-signs. Top rows show original images and bottom rows the reconstructions.
\begin{itemize}
\item Figures \ref{fig:celeba_N10_tpr_examples} and \ref{fig:celeba_N40_tpr_examples} show white-box reconstruction examples for the CelebA experiment with $N=10$ and $N=40$ respectively. Each training set contains $N/10$ images of positive class corresponding to a given person.  The bottom rows contain the reconstructions and the top rows contain their closest match out of the $N/10$ images of the positive class from the original training set.
\item Figure \ref{fig:mnist_tpr_examples} shows white-box reconstruction examples from the conditional reconstructor for the MNIST experiment with $N=10$ and $N=40$. Every training set contains $N/10$ images of each class. The bottom rows contain the reconstructions and the top rows contain their closest match from the original training set.
\item Figure \ref{fig:cifar_tpr_examples} shows white-box reconstruction examples from the conditional reconstructor for the CIFAR experiment with $N=10$ and $N=40$. Every training set contains $N/10$ images of each class. The bottom rows contain the reconstructions and the top rows contain their closest match from the original training set. Note that the reconstructions in Fig.~\ref{fig:cifar_tpr_examples}b) are more generic and often lack details. This is due to the low reconstruction $TPR$ for $N=40$ (see Table~\ref{tab:rec_tpr_fpr}).
\item Figure \ref{fig:fpr_examples} shows examples of false-positive white-box reconstructions for MNIST and CIFAR-10. The bottom rows contain the random outputs of the conditional reconstructor NN and the top rows contain their closest match from the original (random) training set of the size $N=40$.  
\item Figure \ref{fig:CelebA_N40_rec_examples} shows white-box reconstruction examples in four-shot learning illustrating that the reconstructor-NN outputs specific examples from the original training set rather than their aggregates. 
\item Figure \ref{fig:blackbox_examples} shows black-box reconstruction examples from one-shot TL in the MNIST, CIFAR-10 (conditional reconstructor) and CelebA (unconditional reconstructor) setups.
\end{itemize}

\begin{figure}
  \centering
 \includegraphics[width=.8\textwidth]{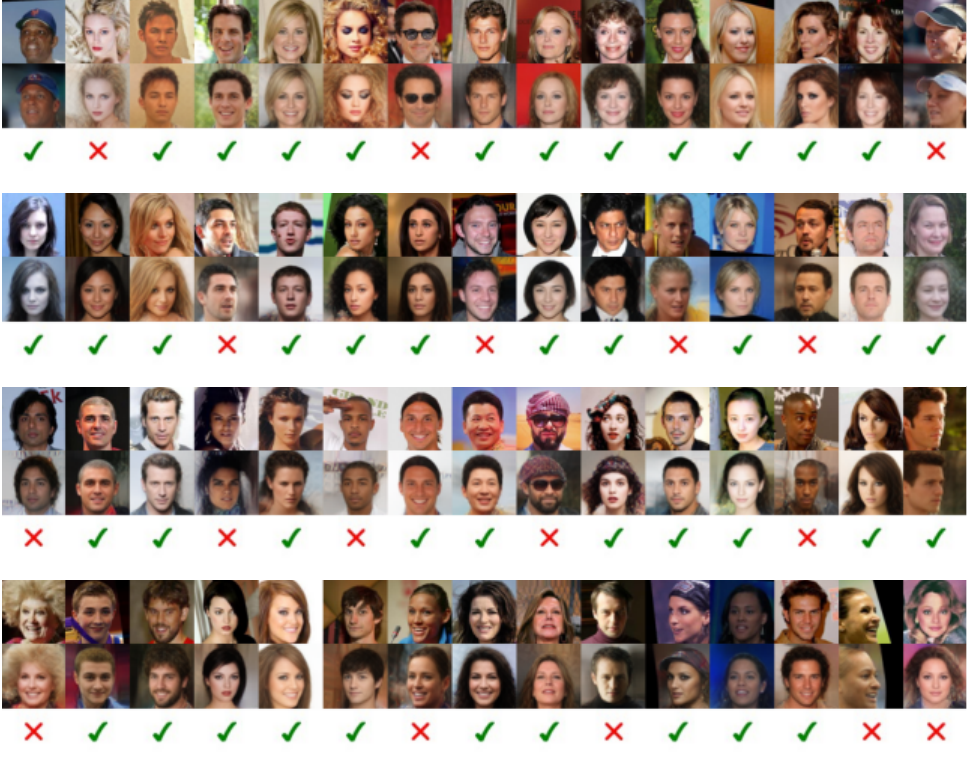}
  \caption{White-box reconstruction examples for the CelebA experiment with $N=10$. Top rows show original images and bottom rows the reconstructions.}
  \label{fig:celeba_N10_tpr_examples}
\end{figure}

\begin{figure}
  \centering
 \includegraphics[width=.8\textwidth]{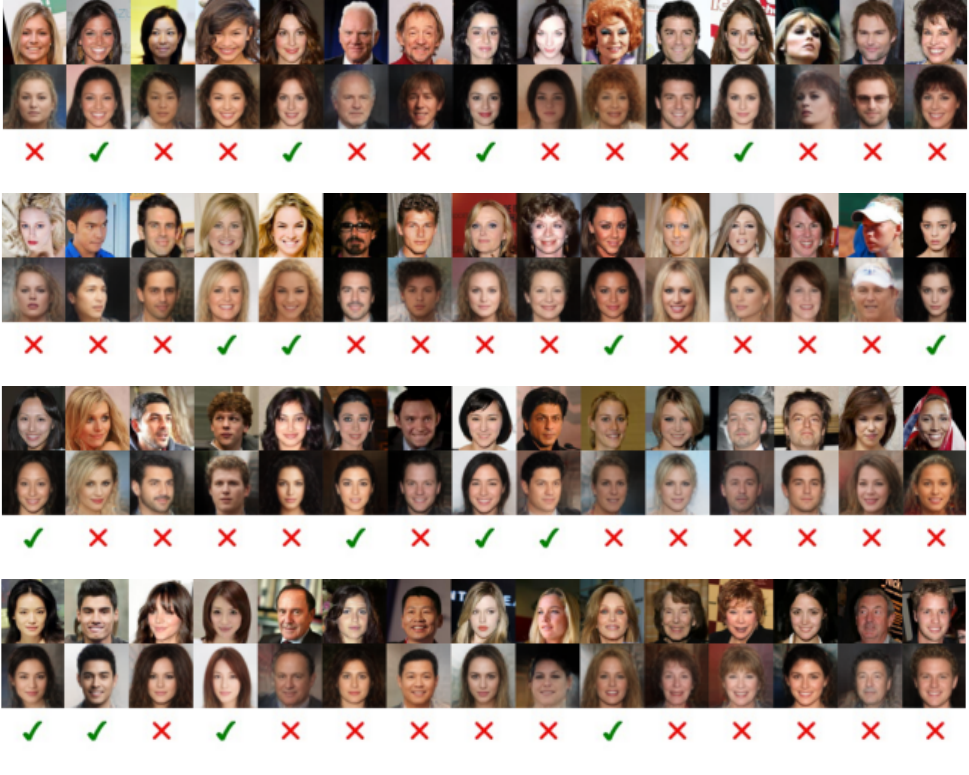}
  \caption{White-box reconstruction examples for the CelebA experiment with $N=40$. Top rows show original images and bottom rows the reconstructions.}
  \label{fig:celeba_N40_tpr_examples}
\end{figure}

\begin{figure}
  \centering
 \includegraphics[width=.9\textwidth]{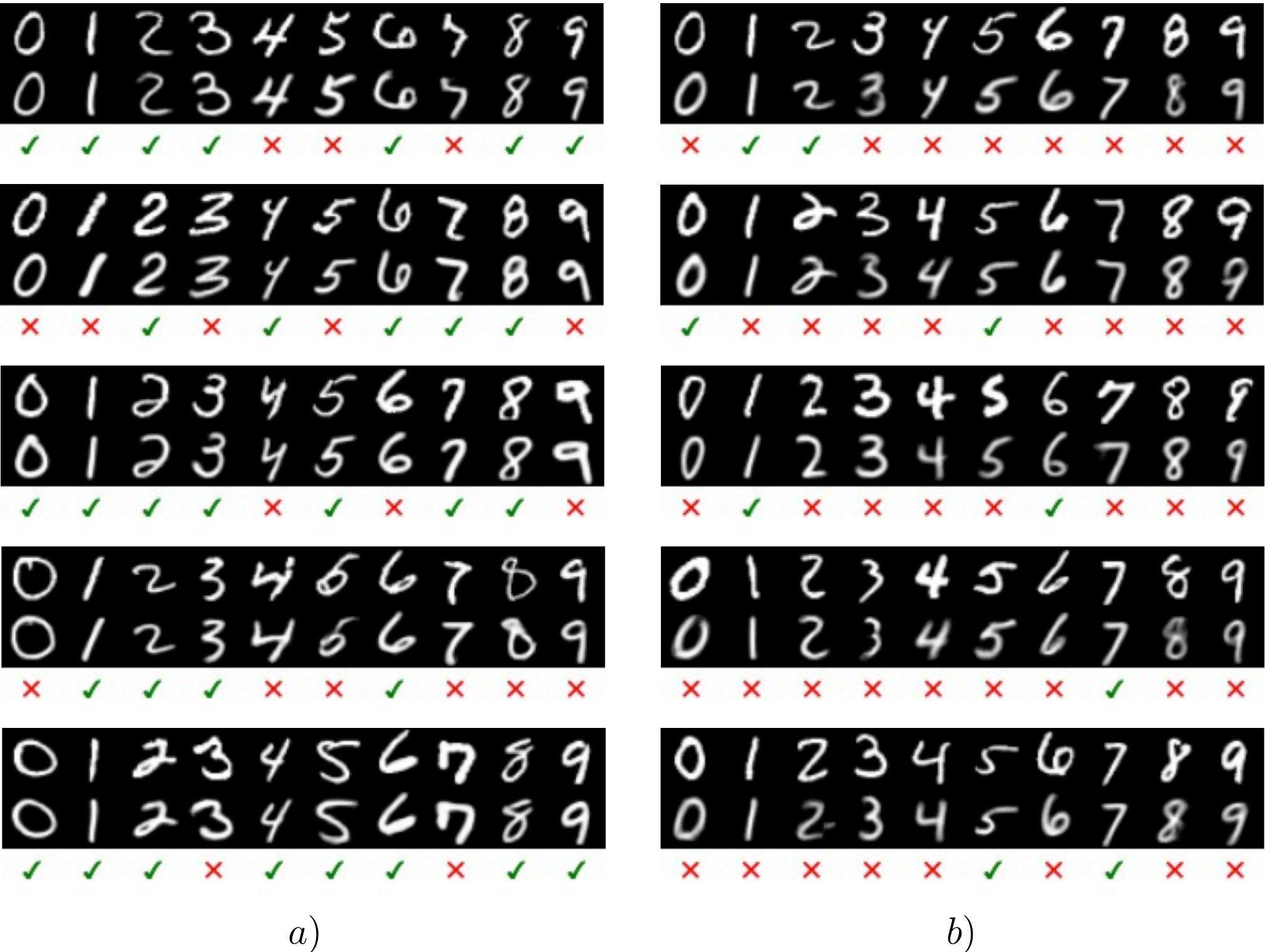}
  \caption{White-box reconstruction examples from the conditional reconstructor for the MNIST experiment with a) $N=10$ and b) $N=40$. Top rows show original images and bottom rows the reconstructions.}
  \label{fig:mnist_tpr_examples}
\end{figure}

\begin{figure}
  \centering
 \includegraphics[width=.9\textwidth]{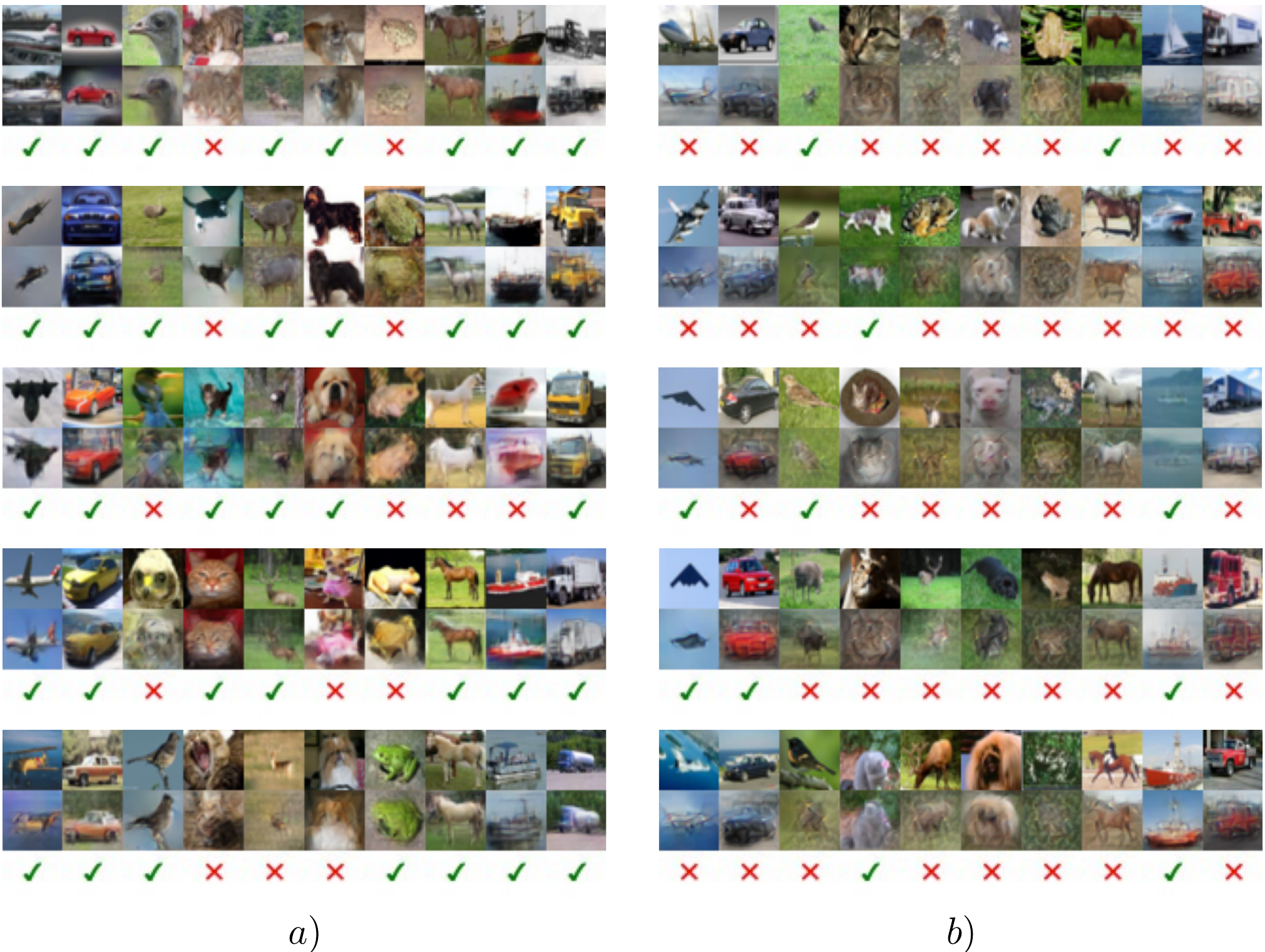}
  \caption{White-box reconstruction examples from the conditional reconstructor for the CIFAR experiment with a) $N=10$ and b) $N=40$. Top rows show original images and bottom rows the reconstructions.}
  \label{fig:cifar_tpr_examples}
\end{figure}

\begin{figure}
  \centering
 \includegraphics[width=.9\textwidth]{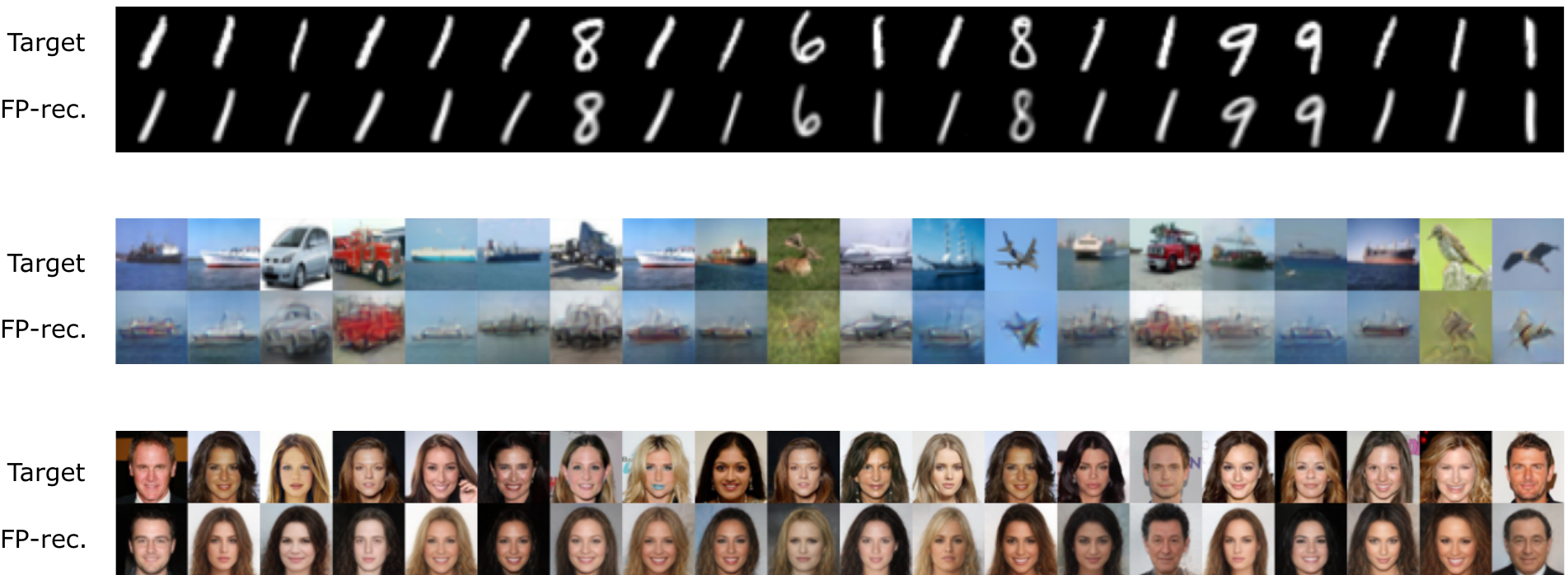}
  \caption{Examples of false-positive (white-box) reconstructions for MNIST, CIFAR-10 and CelebA ($N=40$) at the LPIPS threshold corresponding to $FPR=1\%$. Top rows show original images and bottom rows the reconstructions.}
  \label{fig:fpr_examples}
\end{figure}

\begin{figure}
  \centering
 \includegraphics[width=\textwidth]{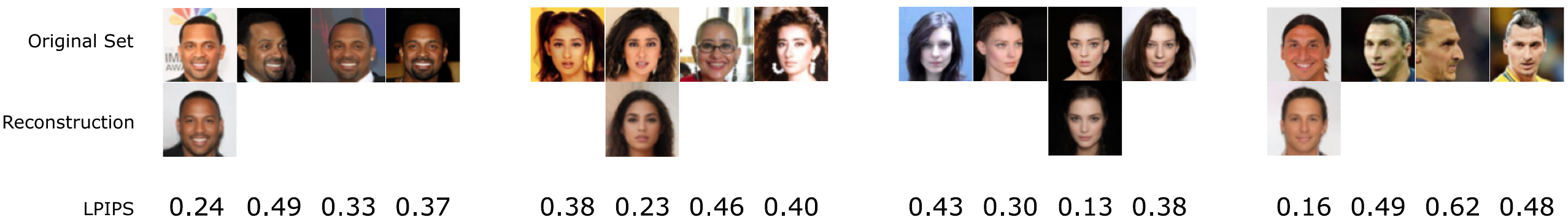}
  \caption{Target set and white-box reconstruction pairs in four-shot learning in the CelebA-experiment. The numbers show the corresponding pairwise LPIPS-error values. The LPIPS threshold corresponding to the (cumulative) $FPR=1\%$ is equal to $0.266$. The reconstructor-NN outputs specific examples from the original training set rather than their aggregates.}
  \label{fig:CelebA_N40_rec_examples}
\end{figure}

\begin{figure}
  \centering
 \includegraphics[width=\textwidth]{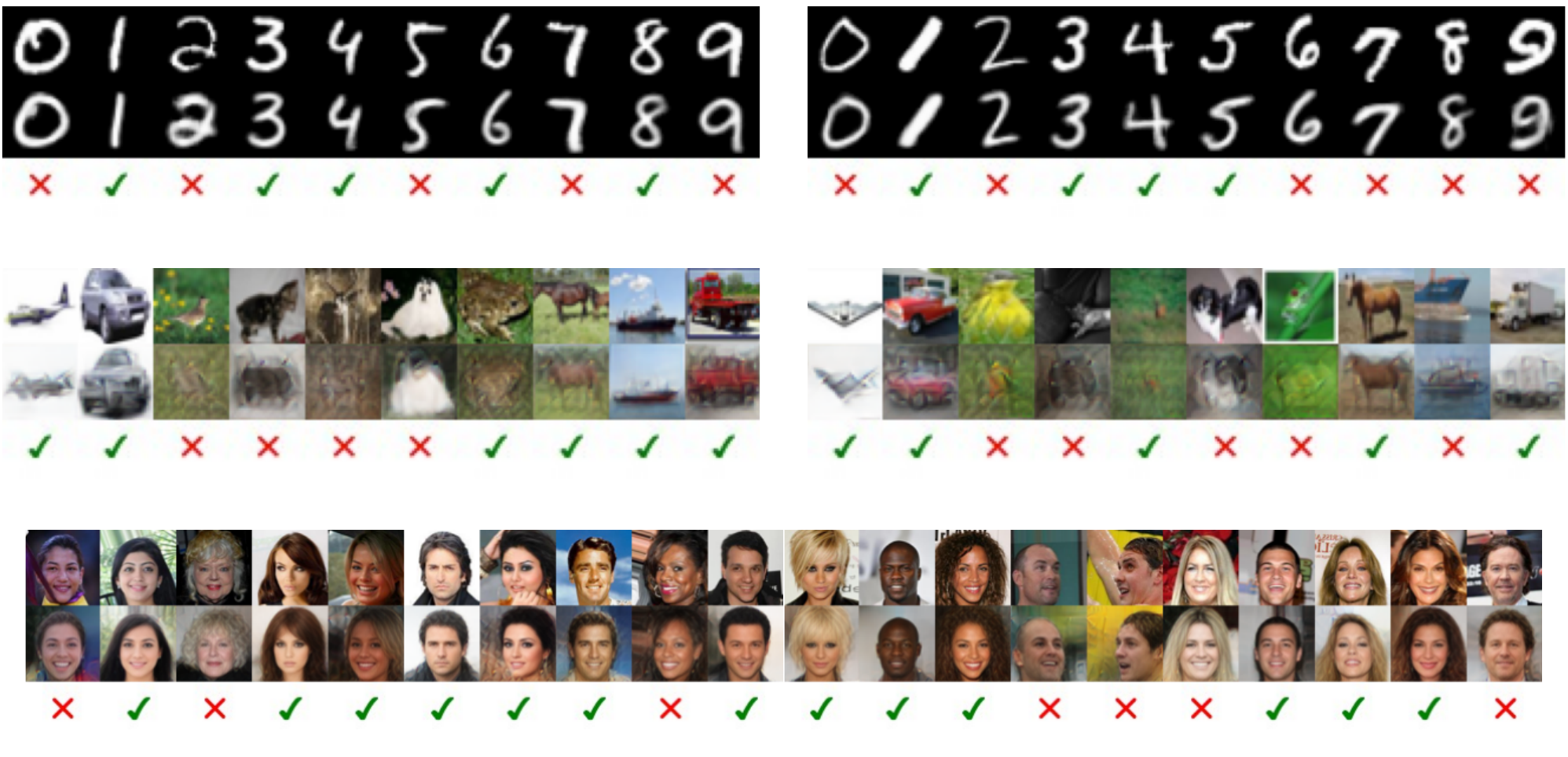}
  \caption{Hard-label black-box reconstruction examples from one-shot (i.e. $N=10$) TL in the MNIST, CIFAR-10 and CelebA setups. Top rows show original images and bottom rows the reconstructions.}
  \label{fig:blackbox_examples}
\end{figure}

\section{Technical Details of the Experimental Setups}
\label{app:experiment_details}
In this section we present details of the experiments presented in Section \ref{sec:attack}. For implementation details please refer to our code on GitHub:
\begin{itemize}
\item Training Data Reconstruction Attacks 
\url{https://github.com/tmaciazek/training-data-reconstruction}
\end{itemize}

\subsection{Image Classifier Pre-training}
\label{sec:pre_training}
Below, we specify the pre-training steps that we have done in each of the experiments. The goal is to pre-train the base models on a sufficiently general task so that they develop robust features which can be successfully transferred to other more specialized classification tasks where much smaller amounts of data are available.
\begin{enumerate}
\item \textbf{MNIST}\quad  We pre-train a scaled-down VGG-11 neural net \citep{vgg} on the EMNIST-Letters \citep{emnist} classification task: $128K$ data, $26$ classes. Our implemented VGG-11 differs from the original one in \citep{vgg} only by the number of channels which we divide by the factor of $16$ (e.g. the first $(3\times 3)$-convolution has $4$ output channels instead of the original $64$). Similarly, the sizes of the final three fully connected layers are $256-256-26$ instead of the original $4096-4096-1000$ that has been designed for the ImageNet-1K classification. We initialize the weights of our VGG-11 randomly, rescale the input images to the $32\times 32$ resolution and normalize them according to $X\to X/127.5-1$. We train for $50$ epochs with the learning rate of $10^{-4}$ and mini-batch size $256$ using Adam optimizer and cross-entropy loss. We obtain the base model test accuracy of $94.6\%$.  

\item \textbf{CIFAR} \quad We pre-train EfficientNet-B0 \citep{EffNet} on the CIFAR-100 \citep{cifar} classification task: $50K$ data, $100$ classes. We use the implementation of EfficientNet-B0 provided by Torchvision \citep{torchvision}. For the weight initialization we use the weights that were pre-trained on the ImageNet-1K classification (available in Torchvision). We subsequently remove the $1000$ output neurons from the top fully connected layer and replace them with $100$ neurons that we initialize with Pytorch's default initialization. The input images are resized to the $224\times 224$ resolution using the bicubic interpolation. The pixel values are subsequently rescaled to $[0,1]$ and normalized as $X\to (X-\mu)/\sigma $, where $\mu=[0.485, 0.456, 0.406]$ and $\sigma=[0.229, 0.224, 0.225]$ for each channel.  We pre-train in two stages. In both stages we use the Adam optimizer and the cross-entropy loss. In the first stage we freeze all the parameters except for the parameters of the output layer and train the output layer for $20$ epochs with learning rate $10^{-4}$, weight decay $10^{-4}$ and batch size $256$. In the second stage we unfreeze all the parameters except for the batch-norm layers and train the entire network for $200$ epochs with learning rate $10^{-6}$ and batch size $64$. We obtain the base model test accuracy of $85.9\%$ which is close to some of the reported benchmarks for EfficientNet-B0, see \citet{EffNet}.

\item \textbf{CelebA}\quad We use WideResNet-50 \citep{wide_resnet} pre-trained on ImageNet-1K. We use the implementation of WideResNet-50-2 and the pre-trained weights provided by Torchvision \citep{torchvision}. The input images are resized to the $232\times 232$ resolution using the bicubic interpolation and center-cropped to the size $224\times 224$. The pixel values are subsequently rescaled to $[0,1]$ and normalized as $X\to (X-\mu)/\sigma $, where $\mu=[0.485, 0.456, 0.406]$ and $\sigma=[0.229, 0.224, 0.225]$ for each channel. 
\end{enumerate}

Note that we could have skipped the above pre-training with CIFAR-100 and just use the weights pre-trained on ImageNet-1K. However, note that our goal is to obtain possibly highly accurate classifiers in the subsequent transfer-learning step. Thus, it is beneficial to pre-train the base model on another public dataset which is more similar to the one used in the ultimate transfer-learning task.

\subsection{Image Classifier Transfer-Learning}\label{app:transfer}
During the transfer-learning step we realize the following general procedure.
\begin{enumerate}
\item We remove the output neurons from each of the pre-trained neural nets described in the Subsection \ref{sec:pre_training}. The outputs of the neural net obtained this way are the post-activation outputs of the penultimate layer of the original neural net.
\item For all the images coming from the datasets (MNIST, CIFAR-10 and CelebA) we pre-compute their corresponding deep features vectors. This is done by feeding the neural nets from point 1 above by the images from the corresponding dataset and saving the obtained (penultimate post-activation) outputs. The input images are transformed according to the transformations specified in the Subsection \ref{sec:pre_training}. This way, we replace the image datasets with their corresponding deep-feature datasets.
\item In each of the experiments we define a new head-NN which takes the pre-computed deep features as inputs and outputs the predictions. Due to the different nature of each of the transfer-learning tasks, in each experiment the head-NN has a slightly different size/architecture. We also train the head-NNs as part of the shadow model training (both for the training shadow model sets and the validation shadow model sets), see Algorithm~\ref{alg:shadow_training}. The head-NN architecture and training details are summarized in Table~\ref{tab:transfer_details}.
\end{enumerate}
\begin{table}
  \caption{The head-NN size/architecture and training details in the transfer-learning step. $FC(k)$ denotes the fully-connected layer with $k$ neurons. In each experiment the head-NNs are trained via the standard gradient descent (full batch, no momentum) with the learning rate $lr$ and weight decay $\lambda_{WD}$. The initial weights of the head-NN are drawn from $\mcN(0,\sigma_{init}^2)$ and the biases are initialized to $0$. The training loss is always the cross-entropy loss. $N$ is the training set size of a given shadow model.}
  \label{tab:transfer_details}
  \centering
  \begin{tabular}{c|cccccc}
      \toprule
   Experiment & Head-NN Architecture & $\sigma_{init}$ & $lr$ &  $\lambda_{WD}$ & Epochs \\
   \midrule
    MNIST   & \small{$\text{Input}\xrightarrow{}FC(10)\xrightarrow{}\text{Output}$}  & $0.002$ & $0.01$ & $10^{-5}$ & $26+3N/5$ \\
    CIFAR   & \small{$\text{Input}\xrightarrow{}FC(10)\xrightarrow{}\text{Output}$}  & $0.0002$ & $0.01$ & $10^{-4}$ & $38+N$  \\
    CelebA   & \small{$\text{Input}\xrightarrow{}FC(4)\xrightarrow{ReLU}FC(1)\xrightarrow{}\text{Output}$}  & $0.0002$ & $0.02$ & $10^{-2}$ & 100  \\
    \bottomrule
  \end{tabular}
\end{table}

In the MNIST- and CIFAR-10 experiments we use balanced training sets i.e., $N=C\times M$ with $C=10$ being the number of classes and $M$ the number of training examples per class. In the CelebA-experiment we emulate transfer-learning with unbalanced data. There, we have $M$ examples of the positive class and $9M$ examples of the negative class. In the CelebA training loss we upweight the positive class with the weight $10$. Table \ref{tab:transfer_accuracy} shows the mean and the standard deviation of the test accuracy of the resulting transfer-learned image classifiers. Because in this work we are evaluating the impact of DP-SGD on the classifier accuracy, it was important to optimize classifier training to obtain reasonably good accuracies. Indeed, the accuracies obtained for transfer-learning in the CIFAR-10 experiment with the training set sizes $N=10$ and $N=40$ are comparable with accuracies reported in the literature, see Figure 2 in \citep{big_transfer}. Note, however, that work \citep{big_transfer} uses data augmentation to improve transfer-learning test accuracy with small datasets, so their accuracies are generally better. We have not used data augmentation in our transfer-learning experiments to keep the setup simple. In the face recognition experiment with CelebA data it has been challenging to train good image classifiers on small datasets. The true-positive rate of the classifiers (corresponding to the correct classification of the minority class) in this experiment has varied significantly between the shadow models, see Table~\ref{tab:celeba_transfer_accuracy}.

\begin{table}
  \caption{The mean and the standard deviation of the test accuracy of the transfer-learned image classifiers. Calculated over a sample of  $5000$ classifiers.}
  \label{tab:histogram_fitted}
  \centering
  \begin{tabular}{cc|c}
      \toprule
    & \multicolumn{1}{c}{$N=10$}  &  \multicolumn{1}{c}{$N=40$} \\
    \midrule
   Experiment & Test Accuracy & Test Accuracy \\
   \midrule
    MNIST   & $0.812\pm 0.041$ & $0.920\pm 0.017$  \\
    CIFAR-10   & $0.587\pm 0.049$ & $0.794\pm 0.022$   \\
    \bottomrule
  \end{tabular}
\end{table}

\begin{table}
  \caption{The mean and the standard deviation of the classification test accuracy, classification true-positive rate ($TPR$, correct classification of the minority class) and classification true-negative rate ($TNR$, correct classification of the majority class) of the transfer-learned face-recognition classifiers trained on unbalanced data in the CelebA-experiment. Calculated over a sample of $5000$ classifiers.}
  \label{tab:celeba_transfer_accuracy}
  \centering
  \begin{tabular}{c|ccc}
      \toprule
  $N$ &  Classification Acc. & Classification $TPR$ & Classification $TNR$ \\
   \midrule
$10$ & $0.914\pm0.037$ & $0.434\pm 0.286$ & $0.967\pm 0.035$ \\
$40$ &  $0.935\pm0.030$ & $0.645\pm0.216$ & $0.967\pm0.022$  \\
    \bottomrule
  \end{tabular}
\end{table}

\subsection{Factors Affecting Reconstruction in the White-Box Attack}\label{app:reconstruction_factors}
Our attack remains robust in a wide range of circumstances which we describe in detail below. 
\begin{figure}
  \centering
 \includegraphics[width=\textwidth]{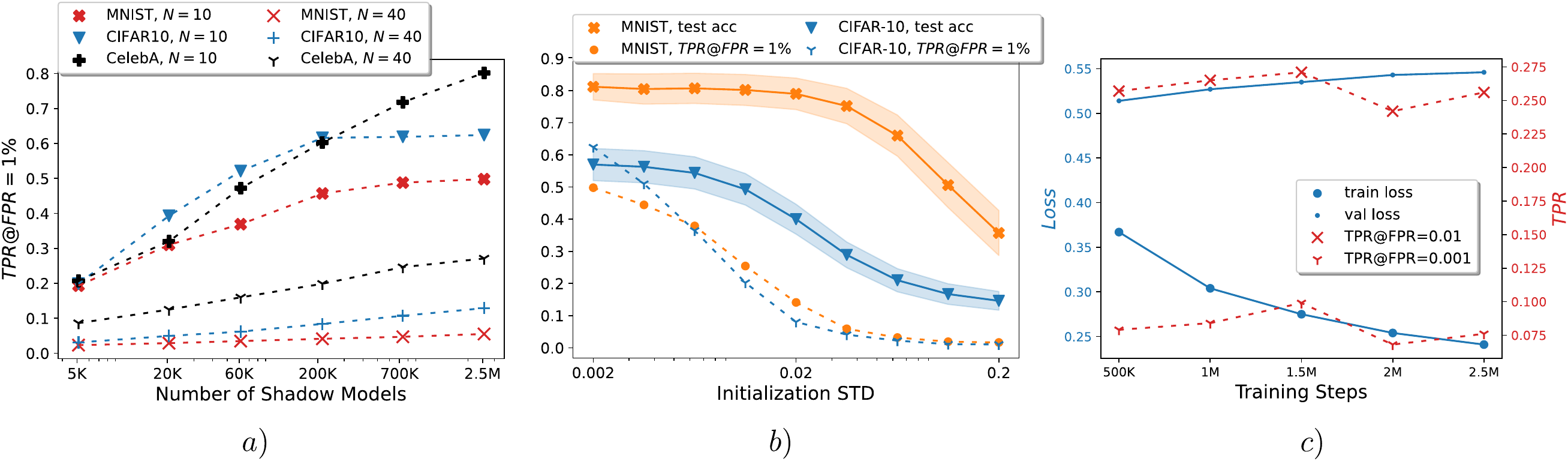}
  \caption{Factors affecting reconstruction in the white-box attack. a) Reconstruction attack $TPR$ as a function of the number of shadow models. b) Influence of weight initialization STD on reconstruction $TPR$, $FPR$ and attacked classifier accuracy for the training dataset size of $N=10$. c) Training and validation loss and $TPR$ changes at $FPR=0.01$ during the reconstructor NN training in the CelebA experiment with $N=40$.}
  \label{fig:rec_factors_plots}
\end{figure}

{\textit{1) Number of Shadow Models.}} To obtain high reconstruction $TPR$ it is necessary to produce as many shadow models as possible given the computational resources at hand -- see Fig.~\ref{fig:rec_factors_plots}a. It is relatively easy to produce several millions of shadow models -- this task is massively parallelizable and training a single shadow model is fast.

{\textit{2) Attacked Model Training Set Size.}} If the number of images in the classifier's training set grows, the reconstruction $FPR$ of our attack becomes higher (see Table~\ref{tab:rec_tpr_fpr}). Intuitively, this means that it becomes easier for the reconstructor to ``guess'' a close match to one of the target images by outputting a generic representative of the target class. We have noticed that this can be mitigated to some extent by training the reconstructor for longer, see Fig.~\ref{fig:rec_factors_plots}c. However, training for too long may reduce $TPR$ due to overfitting, especially when there is not enough shadow models to train on or when the image dataset is small and there are many overlaps between the shadow training datasets.

{\textit{3) Conditional/Non-conditional Reconstructor NN.}} For the same reasons as in point 2) above non-conditional reconstructor NNs have worse $TPR$-at-low-$FPR$. In the CIFAR experiment with the non-conditional reconstructor and $N=10$ we got reconstruction $TPR=31.2\%$ at $FPR=1\%$. As with GANs, the non-conditional reconstructor is susceptible to the mode collapse - its output images tend to be generated from only one or two of the classes. Thus, conditional reconstructor NNs can retrieve much more information about the training dataset.

{\textit{4) Reconstructor conditioning mismatch.}} We can successfully reconstruct training data when the reconstructor is conditioned on a different set of classes than the attacked classifier is trained to distinguish. We have attacked binary classifiers (even/odd for MNIST and animal/vehicle for CIFAR-10) with reconstructors conditioned on ten classes. We have obtained $TPR = 26.7\%$ and $TPR = 32.5\%$ at $FPR=1\%$ for MNIST and CIFAR-10 respectively (training set size $N=10$).

{\textit{5) Training Algorithm $\mcA$: SGD/Adam.}} Results from Table~\ref{tab:rec_tpr_fpr} concern attacks on classifier models trained with $SGD$. We can also successfully attack models trained with Adam \citep{adam}, although with slightly lower success. We have obtained reconstruction  $TPR=34.3\%$ and $TPR=54.2\%$ at $FPR=1\%$ for MNIST and CIFAR-10 respectively (training set size $N=10$). We have not observed significant changes in reconstruction success rates when comparing full-batch SGD/Adam vs. mini-batch SGD/Adam with batch size $B=N/2$.

{\textit{6) Attacked Model Weight Initialization.}} We initialize the weights of the classifier according to i.i.d. normal distribution $\mathcal{N}(0,\sigma^2)$ and initialize the biases to zero. Unless stated otherwise, we choose $\sigma=0.002$ for the MNIST and CIFAR experiments and $\sigma=0.0002$ for the CelebA experiments. Fig.~\ref{fig:rec_factors_plots}b shows how the reconstruction $TPR$, $FPR$ and classifier accuracy depend on $\sigma$ for the training dataset size of $N=10$. Higher values of $\sigma$ decrease reconstruction $TPR$ and increase $FPR$ eventually destroying our attack. However, in order to completely prevent our reconstruction attack one needs to use values of $\sigma$ that result in suboptimal classifier training and highly reduced accuracy. In contrast, reconstruction attacks in the threat model of the informed adversary require access to the exact initialized weights and fail otherwise \citep{informed_adversaries}.

{\textit{7) Attacked Model Underfitting/Overfitting.}} We have trained the attacked models for $1$, $32/48$ (optimal) and $512$ epochs in the MNIST/CIFAR experiments respectively with training set size $N=10$. We have not noticed any significant differences in the corresponding reconstruction success rates.

{\textit{8) Transfer Learning with Data Augmentation.}} We have trained the attacked models in MNIST and CIFAR experiments with data augmentation consisting of random rotations by $[-15^\circ;15^\circ]$ and random horizontal flips (only for CIFAR). Transfer learning with data augmentation is more expensive because the deep features have to be recalculated for every training batch. This extends the shadow model training times by a factor of $\sim10$. We have obtained reconstruction $TPR=59.1\%$ and $TPR=67.2\%$ at $FPR=1\%$ for MNIST and CIFAR-10 respectively (training set size $N=10$). Thus, data augmentation makes our reconstruction attack more effective.

{\textit{9) Out-of-distribution (OOD) Data.}} CIFAR-100 is often used for OOD benchmark tests for CIFAR-10 \citep{ood_cifar}. To study the effectiveness of the reconstructor NN trained on OOD data, we have trained the conditional reconstructor NN on shadow models (classifiers) trained on $N=10$ CIFAR-100 images that were randomly assigned the class labels $c\in\{0,1,\dots,9\}$.  We have subsequently used such a reconstructor to attack classifiers that were trained on CIFAR-10 images (with original labels) obtaining reconstruction $TPR=34.4\%$ at $FPR=1\%$ and $TPR=11.1\%$ at $FPR=0.1\%$. This shows that our attack can be effective even with limited information about the prior training data distribution $\pi$.

Next, we provide some more details concerning the experiments described in this section. Unless stated otherwise, the hyper-parameter and architecture configuration in each experiment has been the same as described in Section \ref{app:transfer}. 

In point 4 (reconstructor conditioning mismatch) we have used the hyper-parameters and head-NN architectures for the binary image classifiers as specified in Table \ref{tab:transfer_details_bin}.
\begin{table}
  \caption{The head-NN size/architecture and training details in the transfer-learning step in the binary classification tasks. $FC(k)$ denotes the fully-connected layer with $k$ neurons. In each experiment the head-NNs are trained via the standard gradient descent (full batch, no momentum) with the learning rate $lr$ and weight decay $\lambda_{WD}$. The initial weights of the head-NN are drawn from $\mcN(0,\sigma_{init}^2)$ and the biases are initialized to $0$. The training loss is always the binary cross-entropy loss. $N$ is the training set size of a given shadow model.}
  \label{tab:transfer_details_bin}
  \centering
  \begin{tabular}{c|cccccc}
      \toprule
   Experiment & Head-NN Architecture & $\sigma_{init}$ & $lr$ &  $\lambda_{WD}$ & Epochs \\
   \midrule
    MNIST   &  \small{$\text{Input}\xrightarrow{}FC(8)\xrightarrow{ReLU}FC(1)\xrightarrow{}\text{Output}$}  & $0.002$ & $0.1$ & $10^{-4}$ & $26+3N/5$ \\
    CIFAR   &  \small{$\text{Input}\xrightarrow{}FC(8)\xrightarrow{ReLU}FC(1)\xrightarrow{}\text{Output}$}  & $0.0002$ & $0.05$ & $10^{-4}$ & $38+N$  \\
    \bottomrule
  \end{tabular}
\end{table}

In point 6 (attacked model weight initialization) we have varied $\sigma_{init}$ and kept all the other hyper-parameters fixed to the values specified in Table \ref{tab:transfer_details}. This way, we have produced the plots presented in Fig.~\ref{fig:rec_factors_plots}b.

\subsection{Technical Details of the Black-Box Attack}
\label{app:blackbox_details}

Our black-box attack consists of two phases.
\begin{enumerate}
    \item We first run the hard-label weight extraction attack by \cite{weight_extraction} (see references therein for the publicly available implementation which we have used). This attack extracts hidden layers' weights which consist of $10$ and $16$ neurons in the MNIST and CIFAR-10 experiments respectively (see \Cref{tab:bb_transfer_details}). Because this is a hard-label black-box attack (meaning that the attacker can only query inputs and the associated heard label decisions of the classifier), it is fundamentally impossible to reconstruct the weights exactly. This is because the hard-label decisions of the classifier remain invariant under transformations which permute and rescale rows of the weight-matrix of a given hidden layer (i.e. permute and rescale neurons, each neuron by a possibly different strictly positive scale factor) and simultaneously permute/rescale the columns of the weight-matrix of the subsequent layer according to the same permutation and reciprocal scale factors \citep{weight_extraction}. This symmetry is reflected in the output of the hard-label reconstruction attack; the reconstructed neurons are returned in a random order and their weight-vectors have norm one. The weights are also reconstructed up to a certain additive error which in our experiments was of the order of $\sim 10^{-9}$ which is very small.
    \item The reconstructor-NN takes as input the weights that were obtained through the hard-label black-box weight extraction attack. To efficiently generate the shadow models for the training set of the reconstructor-NN we are simulating the output of the weight extraction instead of performing the weight extraction on each training-shadow model. This is done by i) rescaling each neuron from the hidden layer to unit $L_2$-norm, ii) randomly permuting the neurons of the hidden layer, iii) adding random-normal noise of mean-zero and scale $10^{-9}$ to the weights to simulate the weight extraction error. The exact (not simulated) weight extraction is performed on each validation-shadow model, however we have found that the reconstruction $TPR$ is approximately the same when calculated using the simulated weight extraction for the validation-shadow models.
\end{enumerate}
Note that the hard-label weight extraction in \cite{weight_extraction} has been primarily tested on deep neural networks where it takes a long time to extract all the weights from the hidden layers (several thousands of seconds per neural net). However, the head-NNs in transfer learning are shallow, hence the timings of the weight extraction are much shorter. In the MNIST-setup from \Cref{tab:bb_transfer_details} the weight extraction took $73$ seconds per shadow model.

\begin{table}
  \caption{The head-NN size/architecture and training details in the transfer-learning step in the one-shot $10$-way classification tasks for MNIST and CIFAR-10 in the black-box attack experiments. $FC(k)$ denotes the fully-connected layer with $k$ neurons. In each experiment the head-NNs are trained via the standard gradient descent (full batch, no momentum) with the learning rate $lr$ and weight decay $\lambda_{WD}$. The initial weights of the head-NN are drawn from $\mcN(0,\sigma_{init}^2)$ and the biases are initialized to $0$. The training loss is always the cross-entropy loss.}
  \label{tab:bb_transfer_details}
  \centering
  \begin{tabular}{c|cccccc}
      \toprule
   Experiment & Head-NN Architecture & $\sigma_{init}$ & $lr$ &  $\lambda_{WD}$ & Epochs \\
   \midrule
    MNIST   &  \small{$\text{Input}\xrightarrow{}FC(10)\xrightarrow{ReLU}FC(10)\xrightarrow{}\text{Output}$}  & $0.01$ & $0.2$ & $10^{-3}$ & $100$ \\
    CIFAR-10   &  \small{$\text{Input}\xrightarrow{}FC(16)\xrightarrow{ReLU}FC(10)\xrightarrow{}\text{Output}$}  & $0.007$ & $0.1$ & $10^{-3}$ & $100$  \\
    \bottomrule
  \end{tabular}
\end{table}

\subsection{Timings of the Data Reconstruction Attack}
\label{app:attack_timings}

\Cref{tab:attack_timings} shows timings for shadow model generation and reconstructor-NN training. Shadow model generation times were obtained by training $2.56\,M$ shadow models. For shadow-model timings we have used head-NN specifications from the white-box attack (see \Cref{tab:transfer_details}), however the timing for the head-NNs used in the black-box attack experiments were approximately the same. For shadow model training we used $\sim400$ CPU cores of Intel Xeon CPU E5-2680 v4 @ 2.40GHz in parallel (each such CPU has $14$ cores) which is the amount compute available on almost every HPC cluster. The reconstructor-NN training times depend strongly on the input size of the reconstructor-NN. This input size is the same in the one-shot and four-shot experiments (since the head-NN architectures are the same), but different for white-box and black-box attacks. The reconstructor-NN training times $t_{train}$ were calculated for GeForce RTX 3090 GPU.

There are several factors that contribute to the efficient generation of shadow models: i) the task is massively parallelizable, ii) the amount of training data is just a few tens of examples, iii) we train only the head-NN which contains only about $\sim 10$ neurons, iv) we do not use data augmentation, so the deep features can be pre-computed before the shadow-model training which avoids the expensive forward-prop through the complex base model during shadow model training. We have also repeated the CIFAR-10 and MNIST experiments using shadow model training with data augmentation -- this has extended the above specified $t_{shadow}$-times by a factor of $\sim 10$ (due to the forward-prop through the base model of every training batch). Thus, even assuming data augmentation our attack is perfectly feasible provided that one has access to a cluster of a few tens of multi-core CPUs.

\begin{table}
  \centering
  \begin{tabular}{c|cc||cc}
      \toprule
    & \multicolumn{2}{c||}{$t_{shadow}$ ($2.56\times 10^6$ models) }  &  \multicolumn{2}{c}{$t_{train}$ ($10^6$ steps)} \\
    \cmidrule(r){2-5}
   Experiment & $N=10$  & $N=40$ &  White-Box & Black-Box\\
   \midrule
MNIST  & $0.8\, h$ & $1.1\,h$ & $18\,h$ & $18\,h$ \\
CIFAR-10 & $1.2\,h$ & $1.8\,h$ & $41\,h$ & $81\,h$ \\
CelebA  & $6.3\,h$ & $6.3\,h$ & $74\,h$ & $72\,h$ \\
    \bottomrule
\end{tabular}
\caption{Timings for shadow model generation ($t_{shadow}$, using $\sim400$ CPU cores) and reconstructor-NN training ($t_{train}$, single GPU).}
\label{tab:attack_timings}
\end{table}

\section{Our Reconstruction attack under DP-SGD}
\label{app:dpsgd}
We have made sure that the DP-SGD training hyper-parameters were chosen to have as small an impact on trained image classifier's accuracy as possible. To this end, we have followed the hyper-parameter selection procedure outlined in \citep{dpfy_ml}. To keep this paper self-contained, we summarize this procedure below. The hyper-parameters in DP-SGD are the gradient clipping norm $C$, noise multiplier $\sigma_{noise}$, poisson sampling rate $q=B/N$ ($B$ is the mini-batch size and $N$ is the size of the training set), the number of training epochs and the learning rate. We have trained the shadow models with weight decay.
\begin{enumerate}
\item The batch size $B$ should be as large as computationally feasible. We choose $B=N-1$ (the largest possible allowing non-trivial mini-batch Poisson sampling).  Another factor to take into account is the fact that training for too many epochs will negatively impact model's accuracy due to overfitting. In line with some transfer-learning guidelines \citep{keras_transfer_learning} we have worked with the optimal $n_E$ (number of training epochs) specified in Table \ref{tab:transfer_details} in the column ``Epochs''. 
\item Tune the other hyper-parameters (learning rate and weight decay) in the non-private setting (without DP-SGD) with the chosen $B$ and $n_E$. The optimal learning rate and weight decay we have found are the same as specified in \ref{tab:transfer_details}. 
\item Choose the clipping norm $C$ by training with DP-SGD with $\sigma_{noise}=0$ and keeping all the parameters selected in previous steps fixed. This can be done by one or more parameter swipes in the log-scale. Small values of $C$ adversely affect the accuracy of the classifier. $C$ should be chosen so that it is small, but at the same time has only slight effect on classifier's accuracy when compared to the accuracy of the non-private classifier.
\item Compute the noise multiplier $\sigma_{noise}$ that makes the model $(\epsilon,\delta)$-DP via the privacy accountant \citep{dp_sgd} with  $C$ and $B$ found in the previous steps. In all experiments we take $\delta=N^{-1.1}$.
\item Do the final learning rate adjustment with $C$, $B$, $\sigma_{noise}$ and weight decay found in the previous steps.
\end{enumerate}
By performing the above hyper-parameter selection procedure we have found that the optimal training hyper-parameters for $B=N-1$ have been the same as the ones specified in Table \ref{tab:transfer_details}. For simplicity, in our experiments we have not performed the final adjustment of the learning rate. However, we have found that the learning rate values specified in Table \ref{tab:transfer_details} have been close to optimal in all experiments and that the model accuracy has not been very sensitive to small changes of the learning rate around these values. The optimal clipping norms $C$ were found to be $4.0$ and $1.2$ for the MNIST and CIFAR experiments respectively. In all the DP-SGD experiments we use the same weight initialization as specified in Table \ref{tab:transfer_details}. For DP-SGD training we have used the {\it{Opacus}} library \citep{opacus}.

Figure \ref{fig:dpsgd_examples} shows some further examples of reconstructions for models trained with DP-SGD.

\begin{figure}
  \centering
 \includegraphics[width=.6\textwidth]{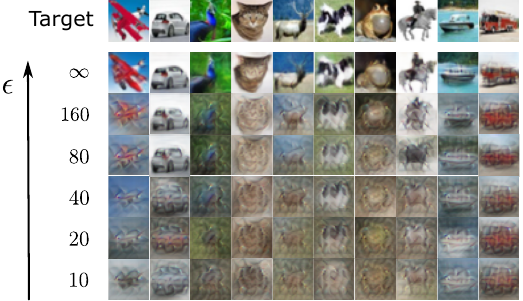}
  \caption{Reconstruction examples for white-box CIFAR-10 and $(\epsilon,\delta)$-DP models for different values of $\epsilon$. Training without DP-SGD means $\epsilon=\infty$. Classifier training set size $N=10$.}
  \label{fig:dpsgd_examples}
\end{figure}

\end{document}